\documentclass[letterpaper]{article} 
\usepackage[submission]{aaai24}  
\usepackage{times}  
\usepackage{helvet}  
\usepackage{courier}  
\usepackage[hyphens]{url}  
\usepackage{graphicx} 
\urlstyle{rm} 
\usepackage{natbib}  
\usepackage{caption} 
\frenchspacing  
\setlength{\pdfpagewidth}{8.5in} 
\setlength{\pdfpageheight}{11in} 
%
\usepackage{amsmath,amssymb,amsthm}
\usepackage{amsfonts}
\usepackage{soul}
\usepackage[ruled,linesnumbered,vlined]{algorithm2e}
\usepackage{tikz}
\usepackage{pgfplots}
\pgfplotsset{compat=1.17}
\usetikzlibrary{patterns}
\usepackage{pgfplotstable}
\usepackage{thm-restate}
\pgfplotsset{
	name nodes near coords/.style={
		every node near coord/.append style={
			name=#1-\coordindex,
			alias=#1-last,
		},
	},
	name nodes near coords/.default=coordnode
}

%
\usepackage{newfloat}
\usepackage{listings}
\DeclareCaptionStyle{ruled}{labelfont=normalfont,labelsep=colon,strut=off} 
\lstset{%
	basicstyle={\footnotesize\ttfamily},
	numbers=left,numberstyle=\footnotesize,xleftmargin=2em,
	aboveskip=0pt,belowskip=0pt,%
	showstringspaces=false,tabsize=2,breaklines=true}

%
\pdfinfo{
/TemplateVersion (2024.1)
}

\newcommand{\bR}{{\mathbb{R}}}
\newcommand{\bX}{\mathbf{X}}
\newcommand{\bc}{\mathbf{c}}
\newcommand{\bd}{\mathbf{d}}

\renewcommand{\sc}{{\rm sc}}

\newtheorem{claim}{Claim}[section]
\newtheorem{theorem}{Theorem}[section]
\newtheorem{corollary}[theorem]{Corollary}
\newtheorem{lemma}[theorem]{Lemma}

\newtheorem{example}[theorem]{Example}
\newtheorem{proposition}[theorem]{Proposition}

\newtheorem{definition}[theorem]{Definition}
\newtheorem{remark}[theorem]{Remark}

\newenvironment{proofof}[1]{{\vspace*{5pt} \noindent\bf Proof of #1:  }}{\hfill\rule{2mm}{2mm}\vspace*{5pt}}

\DeclareMathOperator*{\argmin}{argmin}

\makeatletter
\newcommand{\removelatexerror}{\let\@latex@error\@gobble}
\makeatother

\setcounter{secnumdepth}{1} 

%


\title{On the Existence of EFX (and Pareto-Optimal) Allocations for Binary Chores}
\author{
    Biaoshuai Tao\textsuperscript{\rm 1}\thanks{The authors are ordered alphabetically.}
    Xiaowei Wu\textsuperscript{\rm 2},
    Ziqi Yu\textsuperscript{\rm 1},
    Shengwei Zhou\textsuperscript{\rm 2}
}
\affiliations{
    \textsuperscript{\rm 1}Shanghai Jiaotong University,
    \{bstao, yzq.lll\}@sjtu.edu.cn\\
    \textsuperscript{\rm 2}University of Macau,
    \{xiaoweiwu, yc17423\}@um.edu.mo
%
}

\usepackage{bibentry}

\begin{document}

\maketitle

\begin{abstract}
We study the problem of allocating a group of indivisible chores among agents while each chore has a binary marginal.
We focus on the fairness criteria of envy-freeness up to any item (EFX) and investigate the existence of EFX allocations.
We show that when agents have additive binary cost functions, there exist EFX and Pareto-optimal (PO) allocations that can be computed in polynomial time.
To the best of our knowledge, this is the first setting of a general number of agents that admits EFX and PO allocations, before which EFX and PO allocations have only been shown to exist for three bivalued agents.
We further consider more general cost functions: cancelable and general monotone (both with binary marginal).
We show that EFX allocations exist and can be computed for binary cancelable chores, but EFX is incompatible with PO.
For general binary marginal functions, we propose an algorithm that computes (partial) envy-free (EF) allocations with at most $n-1$ unallocated items.
\end{abstract}

\section{Introduction}\label{sec:intro}
The {\em fair division} problem mainly focuses on allocating heterogeneous resources fairly to a group of agents.
It is a classic resource allocation problem that has been widely studied by mathematicians, economists, and computer scientists.
In this research direction, a popular question is to focus on allocating {\em indivisible} items.
An ideal fairness notion is {\em envy-freeness} (EF)~\cite{foley1967resource}, which requires that every agent values her own bundle not lower than any other bundle held by others.
However, EF allocations are not guaranteed to exist when allocating indivisible items, e.g. allocating one item to two agents.
\citet{journals/teco/CaragiannisKMPS19} proposed a relaxed notion called \emph{envy-freeness up to any item} (EFX), which requires that for any pair of agents $i$ and $j$, $i$ does not envy $j$ after removing an arbitrary item from $j$'s bundle.
The existence of EFX allocations still remains open for general instances and is only known for some special cases, e.g., two agents with general valuations~\cite{plaut2020almost}, three agents with some special valuations~\cite{conf/sigecom/AkramiACGMM23,conf/sigecom/ChaudhuryGM20} and {\em bi-valued} instances~\cite{journals/tcs/AmanatidisBFHV21}.

In the traditional setting of fair division problems, items are assumed to have non-negative effects on agents.
Roughly speaking, giving an item to any agent will make her happier.
This kind of question is called fair division of \emph{goods}.
On the contrary, the complement setting is called fair division of \emph{chores}.
In this case, each item is of negative value to agents, and we use \emph{cost} functions to describe the preference of each agent.
Some definitions of fairness criteria (such as EFX) can be naturally extended to chores division \cite{journals/sigecom/AzizLMW22}.
However, most of the known results do not directly extend to the chores setting.
The minor differences between goods and chores settings nullify some existing techniques, such as Cycle-Elimination, when considering the allocation of chores~\cite{conf/ijcai/0002022}.
Another example can be given by the comparison of goods and chores on fair and efficient allocations.
As one of the influential results in the world of goods, \citet{journals/teco/CaragiannisKMPS19} showed that maximizing {\em Nash social welfare} guarantees the fairness of {\em envy-freeness up to one item} (EF1) (a fairness notion weaker than EFX) and the efficiency of Pareto-optimality (PO).
However, the existence of EF1 and PO allocations remains a major open problem in the context of chores, except for a limited class of bi-valued additive cost functions~\cite{conf/aaai/GargMQ22,conf/atal/EbadianP022}.


Major of our results focus on the settings when the agents' costs for chores have \emph{binary marginals}.
To be exact, for a binary cost function $c$, $c(S\cup\{g\})-c(S)\in\{0,1\}$, where $S$ is any set of chores and $g$ is a single chore.
Binary marginals have received much attention in the literature for both goods~\cite{babaioff2021fair,conf/wine/BarmanV21,journals/teco/KurokawaPS18} and chores~\cite{conf/atal/BarmanNV23}, due to various real-world applications such as bilateral matching~\cite{bogomolnaia2004random} and housing allocations~\cite{conf/sagt/BenabbouCIZ20}.
The binary marginal gives us extra properties to develop new techniques.

\subsection{Our results}
In this paper, we study the fair (and efficient) allocations of indivisible chores when each chore has a binary marginal cost.
We consider the fairness notion of envy-freeness up to any item (EFX) and investigate the existence of EFX allocations for different cost functions.
For additive cost functions with binary marginal, we show that EFX and Pareto-optimal (PO) allocations exist and can be computed in polynomial time.
Prior to our result, it has only been shown that EFX and PO allocations exist for restricted instances with three bivalued agents~\cite{journals/corr/abs-2212-02440}.
This is the first setting, even for additive cost functions, that admits EFX and PO allocations for a general number of agents.

\smallskip
\noindent
{\bf Result 1} (Theorem~\ref{thm:additive}) {\bf .} There exists a polynomial time algorithm that computes EFX and PO allocations for additive cost functions with binary marginals.

We further consider the {\em cancelable} cost functions (with binary marginals) that are first introduced by \citet{conf/aaai/BergerCFF22}.
Cancelable cost functions generalize several cost functions including {\em additive}, {\em budget additive}, and more.
We show that EFX is no longer compatible with PO for the more general cancelable cost functions.
We propose a polynomial-time algorithm that computes EFX allocations for binary cancelable cost functions.
To the best of our knowledge, this is the first non-trivial result regarding the existence and computation of EFX allocations for non-identical cost functions that are beyond additive.

\smallskip
\noindent
{\bf Result 2} (Theorem~\ref{thm:main_cancelable}) {\bf .} There exists a polynomial time algorithm that computes EFX allocations for cancelable cost functions with binary marginals.

We further consider submodular and general cost functions with binary marginals.
We show that under the binary marginal property, the set of submodular cost functions is a superset of the set of cancelable cost functions.
We propose a polynomial-time algorithm that computes $2$-approximately EFX allocations for binary submodular cost functions, and (partial) EF allocations with at most $n-1$ unallocated items for general cost functions with binary marginals.

\smallskip
\noindent
{\bf Result 3} (Theorem~\ref{thm:general}) {\bf .} There exists a polynomial time algorithm that computes a partial allocation that is envy-free and leaves at most $n-1$ items unallocated, for general cost functions with binary marginals.

\smallskip
\noindent
{\bf Result 4} (Theorem~\ref{thm:submodular}, Corollary~\ref{cor:submodular}) {\bf .} There exists a polynomial time algorithm that computes $2$-approximately EFX allocations for submodular cost functions with binary marginals.

We summarize our results in Figure~\ref{fig:relationship} which also presents the relationship between different cost functions with binary marginals.

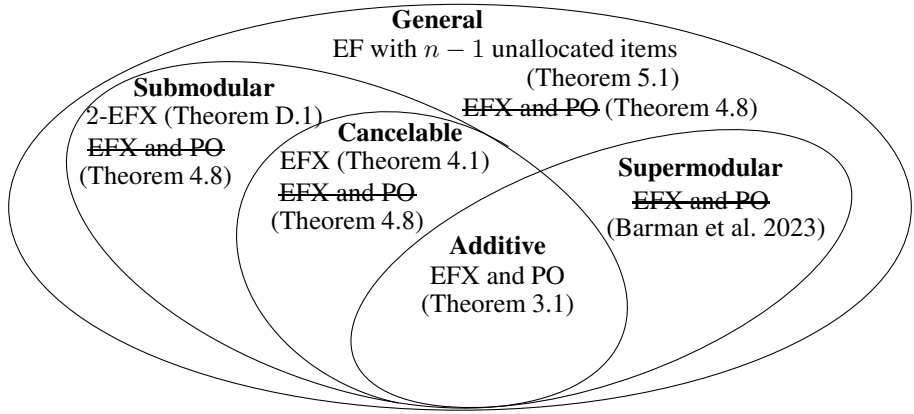
\begin{figure*}[htbp]
    \begin{center}
    \begin{tikzpicture}
        \draw [rotate around ={20:(0, 0)}] (2, 0) ellipse (3.5 and 1.5);
        \draw [rotate around ={340:(0, 0)}] (-1.8,0.55) ellipse (3.9 and 2);
        \node at (0.5, 1) {\textbf{Additive}};
        \node at (0.5, 0.6) {EFX and PO};
        \node at (0.5, 0.2) {(Theorem~\ref{thm:additive})};
        \draw [rotate around ={340:(0.5,2.2)}] (0.6,2.35) arc (75:280:2.45 and 1.9);
        \node at (-0.8, 2.5) {\textbf{Cancelable}};
        \node at (-1, 2.1) {EFX (Theorem~\ref{thm:main_cancelable})};
        \node at (-1.5, 1.7) {\st{EFX and PO}};
        \node at (-1.5, 1.3) {(Theorem~\ref{thm:cancelable-po})};
        \node at (-3.4, 3.1) {\textbf{Submodular}};
        \node at (-3.4, 2.7) {$2$-EFX (Theorem~\ref{thm:submodular})};  
        \node at (-4.05, 2.3) {\st{EFX and PO}};
        \node at (-4.05, 1.9) {(Theorem~\ref{thm:cancelable-po})}; 
        \node at (3.2, 2) {\textbf{Supermodular}};
        \node at (3.2, 1.6) {\st{EFX and PO}};
        \node at (3.4, 1.2) {(Barman et al. \citeyear{conf/atal/BarmanNV23})};
        \draw (0, 1.5) ellipse (6 and 2.7);
        \node at (-0.3, 4) {\textbf{General}};
        \node at (0.6, 3.6) {EF with $n-1$ unallocated items};
        \node at (1.9, 3.2) {(Theorem~\ref{thm:general})};
        \node at (2, 2.8) {\st{EFX and PO} (Theorem~\ref{thm:cancelable-po})};
    \end{tikzpicture}
    \end{center}
    \vspace{-15pt}
    \caption{Illustration of our results and the relationship between different set functions with binary marginals}
    \label{fig:relationship}
\end{figure*}

\subsection{Other Related Work}
The setting of binary marginal cases is widely considered in recent years.
In the literature of goods, \citet{babaioff2021fair} and \citet{conf/wine/0002PP020} considered the class of binary submodular (or matroid-rank) valuations and showed that EFX and PO allocations always exist.
They designed a mechanism that maximizes \emph{Nash social welfare} which generates EFX and PO properties under the matroid-rank valuations.
\citet{conf/atal/ViswanathanZ23} improved this by proposing the ``Yankee-Swap algorithm'' that runs significantly faster.
However, maximizing Nash social welfare might fail to guarantee EFX allocations when it comes to more general valuations, which makes it impossible to extend Babaioff et al.'s and Halpern et al.'s mechanisms directly.
Very recently, \citet{journals/corr/abs-2308-05503} developed a polynomial-time algorithm, based on cycle-elimination techniques, that computes EFX allocations for general valuations with binary marginals.
However, these results can not be easily extended to that of chores.
In the setting of chores division, the cycle-elimination process will sometimes break the EFX property, which will not happen in the world of goods division.
When it comes to binary chores division, \citet{conf/atal/BarmanNV23} considered binary supermodular costs and showed that: 1) EF1 and PO allocations exist and can be computed in polynomial time; 2) EFX and PO are incompatible, under the binary supermodular valuations.
However, the existence of EFX allocations for non-identical cost functions and the compatibility of EFX and PO for other binary functions, e.g., {\em additive} and {\em submodular}, are still open.

Though the existence of EFX allocations remains open, much partial progress has been made over the past years.
A line of the literature focuses on exploring under which restriction EFX allocations exist. 
For the allocation of goods, \citet{plaut2020almost} showed that EFX allocations exist when all agents have identical valuations.
On the restriction of the number of agents, EFX allocations have been shown to exist for two agents with general valuations~\cite{plaut2020almost} and three agents with additive valuations~\cite{conf/sigecom/ChaudhuryGM20}.
The latter result was recently extended to more general valuations by \citet{conf/sigecom/AkramiACGMM23}.
In the context of chores, it has been shown that EFX allocations exist for some special cases, e.g., agents have monotone identical functions~\cite{journals/corr/abs-2006-04428}, identical ordering (IDO) instances~\cite{conf/www/0037L022}, three agents with bivalued additive functions~\cite{conf/ijcai/0002022}, two types of chores~\cite{conf/atal/0001LRS23}.
Another branch of the literature is to explore the relaxations of EFX, e.g., approximately EFX allocations~\cite{plaut2020almost,conf/atal/Chan0LW19,journals/tcs/AmanatidisMN20,conf/ijcai/0002022} and partial EFX allocations with reasonable guarantees~\cite{journals/siamcomp/ChaudhuryKMS21,conf/aaai/BergerCFF22,conf/ec/CaragiannisGH19,conf/sigecom/ChaudhuryGMMM21,journals/corr/abs-2212-09482}.
For a more comprehensive overview of the fair division problem, we refer to the recent surveys by \citet{journals/ai/AmanatidisABFLMVW23} and ~\citet{journals/sigecom/AzizLMW22}.


\section{Preliminaries}\label{sec:preli}

We consider how to fairly allocate a set of $m$ indivisible chores/items $M$ to a group of $n$ agents $N$.
We call a subset of items, e.g., $S \subseteq M$, a \emph{bundle}.
For ease of notation we use $X+e$ and $X-e$ to denote $X\cup \{e\}$ and $X \setminus \{e\}$, respectively, for any $X\subseteq M$ and $e\in M$.
%
A complete allocation $\bX = (X_1,\ldots,X_n)$ is an $n$-partition of the items $M$ such that $X_i \cap X_j = \emptyset$ for all $i \neq j$ and $\cup_{i\in N} X_i = M$, where agent $i$ receives bundle $X_i$.
When $\cup_{i\in N} X_i \subsetneq M$, we call it a partial allocation.
Given an instance $I = (N, M, \bc)$ where $\bc=(c_1,\ldots,c_n)$ is the set of \emph{cost functions} (to be defined immediately), our goal is to find an allocation $\bX$ that is \emph{fair} to all agents.

\paragraph{Cost Functions.}
Each agent $i\in N$ has a cost function $c_i: 2^M \to \bR^+\cup \{0\}$ that assigns a cost to every bundle of items.
More specifically, for any subset of items $S\subseteq M$, the cost of $S$ to agent $i$ is denoted as $c_i(S)$.
For convenience we use $c_i(e)$ to denote $c_i(\{e\})$, the cost of agent $i\in N$ on item $e\in M$.
We use $\bc = (c_1,\ldots, c_n)$ to denote the cost functions of agents.
For any subset of items $S\subseteq M$, any item $e\in M$, we use $c_i(e\mid S)$ to denote the marginal cost of item $e$ to subset $S$, under agent $i$'s cost function, i.e., $c_i(S+e) - c_i(S)$.
Similarly, for $S,T\subseteq M$, we denote $c_i(T\mid S)=c_i(T\cup S)-c_i(S)$.
\begin{itemize}
    \item \emph{Binary Marginal.}
    We focus on the instances in which each item gives \emph{binary} marginal cost to any subset of items, that is, for any $i\in N, e\in M, S\subseteq M$, we have $c_i(e\mid S) \in \{0,1\}$.
    Note that, any binary marginal cost function is also \emph{monotone}, i.e., $c_i(S+e) \geq c_i(S)$ for any $i\in N, e\in M, S\subseteq M$.
    \item \emph{Additive Function.}
    A cost function $c_i(\cdot)$ is said to be additive if $v_i(S) = \sum_{e\in S} c_i(e)$ for any $S\subseteq M$.
    \item \emph{Cancelable Function.}
    As a generalization of additive cost functions, cancelable functions are first introduced by \citet{conf/aaai/BergerCFF22} in the fair allocation of goods.
    A cost function $c_i(\cdot)$ is said to be cancelable if for any two bundles
    $S, T \subseteq M$ and item $e\in M\setminus (S\cup T)$, we have 
    \begin{equation}
        c_i(S + e) > c_i(T+e) \Rightarrow c_i(S) > c_i(T).
    \end{equation}
    As ~\citet{conf/aaai/BergerCFF22} pointed out, cancelable functions describe many natural meaningful valuation functions in economics, including additive, unit-demand, and budget-additive cost functions.
    \item \emph{Submodular Function.}
    A cost function $c_i(\cdot)$ is submodular if and only if for any $S\subseteq T, e\in M\setminus T$, we have $c_i(e\mid S) \geq c_i(e\mid T)$.
\end{itemize}

In Appendix~\ref{sect:relationship}, we will show that, under set functions with binary marginals, the set of additive functions is a proper subset of the set of cancelable functions, and the set of cancelable functions is a proper subset of the set of submodular functions. Notice that the latter only holds for set functions with binary marginals.

\paragraph{Fairness.}
We first introduce the {\em envy-freeness} (EF) for the allocation of chores.
An allocation $\bX$ is EF if for any agents $i, j\in N$, $c_i(X_i) \leq c_i(X_j)$.
Since EF allocations are not guaranteed to exist for indivisible chores, we focus on a relaxation of EF: {\em envy-freeness up to any item} (EFX).
An allocation $\bX$ is EFX if for any agents $i, j\in N$, either $X_i = \emptyset$, or $c_i(X_i-e) \leq c_i(X_j)$ for any item $e\in X_i$.
For $\alpha \geq 1$, an allocation is $\alpha$-approximately envy-free, denoted by $\alpha$-EF, if $c_i(X_i)\leq\alpha\cdot c_i(X_j)$;
an allocation is $\alpha$-approximately EFX, denoted by $\alpha$-EFX, if either $X_i=\emptyset$ or $c_i(X_i-e)\leq \alpha\cdot c_i(X_j)$ for any $i,j\in N$ and any $e\in X_i$.

\paragraph{Efficiency.}
An allocation $\bX'$ \emph{Pareto-dominates} another allocation $\bX$ if $c_i(X'_i) \leq c_i(X_i)$ for all $i\in N$ and the inequality is strict for at least one agent.
An allocation $\bX$ is said to be {\em Pareto-optimal} (PO) if $\bX$ is not Pareto-dominated by any other allocation.
For any allocation $\bX = (X_1, \cdots, X_n)$, the social cost $\sc(\bX)$ is defined as the sum of the cost of each agent, i.e., $\sc(\bX)=\sum_{i\in N} c_i(X_i)$.
An allocation that minimizes the social cost is naturally Pareto-optimal since any Pareto-improvement strictly decreases the social cost.

\subsection{Relationships Between Set Functions}
In Appendix~\ref{sect:relationship}, we prove the following two theorems, which state that additive functions form a special case of cancelable functions, and cancelable functions form a special case of submodular functions, when we are considering functions with binary marginals. Notice that Theorem~\ref{thm:cancelable-submodular} only holds for functions with binary marginals.

\begin{restatable}{theorem}{thmAddCancelable}
    The set of binary additive set functions is a proper subset of the set of cancelable set functions with binary marginals.
\end{restatable}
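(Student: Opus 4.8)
The plan is to prove the statement in two steps: first that every binary additive function is a cancelable function with binary marginals (the set inclusion $\subseteq$), and then that this inclusion is strict by exhibiting a witness.

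For the inclusion I would argue that additivity already implies cancelability, without invoking the binary assumption. If $c$ is additive, then for any $S, T \subseteq M$ and any $e \in M \setminus (S \cup T)$ we have $c(S + e) = c(S) + c(e)$ and $c(T + e) = c(T) + c(e)$, so the premise $c(S+e) > c(T+e)$ is equivalent to $c(S) > c(T)$, and the defining implication of cancelability holds trivially (in fact as an equivalence). To confirm that a binary additive function also has binary marginals, note that additivity gives $c(e \mid S) = c(S + e) - c(S) = c(e)$ for every $S$, and for a binary additive function each singleton cost satisfies $c(e) \in \{0,1\}$; hence every marginal lies in $\{0,1\}$. This places every binary additive function inside the class of cancelable functions with binary marginals.

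For properness it suffices to produce a single cancelable, binary-marginal function that is not additive. I would take the ground set $M = \{a, b\}$ and define $c(\emptyset) = 0$ and $c(S) = 1$ for every nonempty $S$. This $c$ fails additivity since $c(\{a,b\}) = 1 \neq 2 = c(a) + c(b)$, and its marginals are binary because $c(e \mid \emptyset) = 1$ while $c(e \mid \{f\}) = c(\{a,b\}) - c(\{f\}) = 0$ for the other item $f$. The crucial point is that cancelability holds \emph{vacuously}: since $c$ takes only the values $0$ and $1$, the premise $c(S+e) > c(T+e)$ would force $c(S+e) = 1$ and $c(T+e) = 0$, but $T + e$ is nonempty and hence has cost $1$, a contradiction; so the premise never fires and the implication is satisfied everywhere.

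The construction has no genuine obstacle once the right witness is chosen, so the only subtlety worth flagging is precisely this design choice. A candidate that is strictly monotone at the top would force a nontrivial case analysis of the cancelability implication, whereas making the function flat on all nonempty sets collapses every instance of the premise, reducing the verification to a one-line vacuity argument; the same witness extends verbatim to any larger ground set.
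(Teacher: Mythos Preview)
Your proof is correct and follows the same two-step structure as the paper: the inclusion is immediate from additivity, and properness is witnessed by a capped-cardinality function that is not additive. The paper's witness is $\phi(X)=\min(|X|,5)$ on a ground set of size at least six, whereas your $\phi(X)=\min(|X|,1)$ on two elements is smaller and lets cancelability hold vacuously rather than through a case analysis.
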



\begin{restatable}{theorem}{thmCancelableSubmodular}\label{thm:cancelable-submodular}
    The set of cancelable set functions with binary marginals is a proper subset of the set of submodular set functions with binary marginals.
\end{restatable}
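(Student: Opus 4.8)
The plan is to prove the two halves of the statement separately: first that every cancelable function with binary marginals is submodular, and then that the containment is strict by exhibiting a submodular binary-marginal function that fails cancelability.

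For the containment, I would first record an \emph{order-preservation} consequence of cancelability that is easier to use than the definition: for any $S,T$ and $e\notin S\cup T$, if $c(S)=c(T)$ then $c(S+e)=c(T+e)$. This follows immediately by applying the definition in both directions — if $c(S+e)>c(T+e)$ then $c(S)>c(T)$, and symmetrically with the roles of $S$ and $T$ swapped — so equality after adding $e$ is forced whenever there is equality beforehand. To then verify submodularity I fix $S\subseteq T$ and $e\notin T$ and aim for $c(e\mid S)\ge c(e\mid T)$. Because marginals are binary, the only way this can fail is $c(e\mid S)=0$ and $c(e\mid T)=1$, so it suffices to rule this out. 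Assuming $c(S+e)=c(S)$, I would enumerate $T\setminus S=\{d_1,\dots,d_k\}$ (none of the $d_j$ equals $e$, since $e\notin T$) and walk along the chain $S_0=S\subseteq S_1\subseteq\cdots\subseteq S_k=T$ with $S_j=S\cup\{d_1,\dots,d_j\}$. An induction using the order-preservation fact — applied to the pair $S_j$ and $S_j+e$, which have equal cost by the inductive hypothesis, together with the common new element $d_{j+1}\notin S_j+e$ — yields $c(S_{j+1}+e)=c(S_{j+1})$, i.e.\ the marginal of $e$ stays $0$ all the way to $S_k=T$. This contradicts $c(e\mid T)=1$ and establishes submodularity.

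For strictness I would exhibit a concrete counterexample on three items $\{x,y,z\}$: take $c$ to be the rank function of the matroid in which $x$ and $z$ are parallel (forming a circuit of size two) and $y$ is free, so that $c(\emptyset)=0$, $c(\{x\})=c(\{y\})=c(\{z\})=1$, $c(\{x,z\})=1$, and $c(\{x,y\})=c(\{y,z\})=c(\{x,y,z\})=2$. This is a matroid rank function, hence submodular with all marginals in $\{0,1\}$, but it is not cancelable: taking $S=\{y\}$, $T=\{z\}$, and $e=x\notin S\cup T$ gives $c(S+e)=2>1=c(T+e)$ while $c(S)=c(T)=1$, directly contradicting the cancelable implication.

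I expect the main obstacle to be the containment direction, specifically getting the induction set-up right: at each step one must check that the element $d_{j+1}$ is distinct from $e$ and absent from $S_j$, so that the order-preservation fact genuinely applies, and one must use the binary-marginal hypothesis precisely where it collapses the submodular inequality to the single case that needs excluding. The strictness direction is then routine once the parallel-element example is verified.
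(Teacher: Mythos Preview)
Your proposal is correct. The containment argument is essentially the same as the paper's: both walk along a chain from $S$ to $T$ by adding one element at a time and use the binary-marginal assumption to locate a single step where cancelability is violated. The paper phrases this as a contrapositive (non-submodular $\Rightarrow$ non-cancelable, stopping at the first step where the marginal of $e$ jumps from $0$ to $1$), while you phrase it directly (marginal $0$ at $S$ propagates to marginal $0$ at $T$); your explicit ``equality-preservation'' lemma is exactly Property~1 of Proposition~\ref{prop:cancelable} in the paper.

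Your strictness witness differs from the paper's and is in fact smaller: the paper uses four items with $\phi(X)=|X|-1$ when $\{a,b,c\}\subseteq X$ and $\phi(X)=|X|$ otherwise, whereas your three-item matroid with a single parallel pair already does the job. Both are valid; yours has the advantage of being an explicit matroid rank function, so submodularity and binary marginals need no case-check.
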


We also list some properties for cancelable and submodular functions.

\begin{restatable}{proposition}{propcancelable}\label{prop:cancelable}
    Let $\phi:M\to\mathbb{Z}_{\geq0}$ be a cancelable function.
    \begin{enumerate}
        \item For any $S,T\subseteq M$ and any $e\in M\setminus(S\cup T)$, if $\phi(S)=\phi(T)$, then $\phi(S+e)=\phi(T+e)$.
        \item For any $S,T\subseteq M$ and any $U\subseteq M$ with $U\cap (S\cup T)=\emptyset$, if $\phi(S)=\phi(T)$, then $\phi(S\cup U)=\phi(T\cup U)$.
    \end{enumerate}
\end{restatable}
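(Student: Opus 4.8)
The plan is to establish the two parts in sequence: the first follows directly from the defining implication of cancelability applied in both orderings of $S$ and $T$, and the second then follows from the first by a short induction on $|U|$. Throughout, I treat $\phi$ as a cancelable set function with the property $\phi(S+e) > \phi(T+e) \Rightarrow \phi(S) > \phi(T)$ whenever $e \in M\setminus(S\cup T)$.

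For part~(1), I would argue via the contrapositive. Suppose $\phi(S)=\phi(T)$ but, for contradiction, $\phi(S+e)\neq\phi(T+e)$. Then exactly one strict inequality holds. If $\phi(S+e)>\phi(T+e)$, applying cancelability to the ordered pair $(S,T)$ with item $e$ gives $\phi(S)>\phi(T)$, contradicting $\phi(S)=\phi(T)$; the symmetric case $\phi(T+e)>\phi(S+e)$ applies cancelability to $(T,S)$ and yields $\phi(T)>\phi(S)$, again a contradiction. Hence $\phi(S+e)=\phi(T+e)$. The only hypothesis that must be verified before invoking the definition is the side condition $e\notin S\cup T$, which is assumed directly. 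I would emphasize that because cancelability is stated as a \emph{one-directional} implication, this step must invoke it twice — once per ordering — to promote it to the two-sided equality-preservation statement that part~(2) will repeatedly call upon.

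For part~(2), I would induct on $|U|$. The base case $U=\emptyset$ is exactly the hypothesis $\phi(S)=\phi(T)$. For the inductive step, fix $U$ with $|U|\geq 1$, choose any $e\in U$, and set $U'=U-e$. Since $U'\subseteq U$ and $U\cap(S\cup T)=\emptyset$, the induction hypothesis applies to $U'$ and gives $\phi(S\cup U')=\phi(T\cup U')$. I then apply part~(1) to the pair $(S\cup U',\,T\cup U')$ with the single item $e$. Its side condition requires $e\notin(S\cup U')\cup(T\cup U')$, which holds because $e\in U$ forces $e\notin S\cup T$ while $e\notin U'$ by construction. Part~(1) then yields $\phi((S\cup U')+e)=\phi((T\cup U')+e)$, i.e.\ $\phi(S\cup U)=\phi(T\cup U)$, closing the induction.

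I do not anticipate a genuine obstacle: the argument is routine once part~(1) is in hand. The only points demanding care are the membership side conditions $e\notin S\cup T$ in part~(1) and $e\notin(S\cup U')\cup(T\cup U')$ when part~(1) is invoked inside the induction; both follow mechanically from the stated disjointness hypotheses, so the real content of the proposition is concentrated in the elementary two-sided use of cancelability in part~(1).
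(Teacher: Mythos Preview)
Your proposal is correct and is exactly the natural argument the paper has in mind: the paper does not spell out a proof, stating only that the properties are straightforward, and your contrapositive use of cancelability in both directions for part~(1) followed by induction on $|U|$ for part~(2) is the obvious way to fill this in. The side conditions you flag are handled correctly, so there is nothing to add.
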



\begin{restatable}{proposition}{propsubmodular}\label{prop:submodular}
    Let $\phi:M\to\mathbb{Z}_{\geq0}$ be a submodular function with binary marginals.
    \begin{enumerate}
        \item For any $S,T\subseteq M$, $\phi(S)+\phi(T)\geq \phi(S\cup T)+\phi(S\cap T)$. In particular, $\phi(S)+\phi(T)\geq \phi(S\cup T)$.
        \item If $\phi(S)=1$ holds for some $S$ with $|S|\geq 2$, there exists $e\in S$ such that $\phi(S-e)=1$.
    \end{enumerate}
\end{restatable}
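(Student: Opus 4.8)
The plan is to first establish the submodular inequality in part~1 directly from the diminishing-marginal-returns definition given in the preliminaries, and then obtain part~2 as a short consequence of part~1.

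For part~1, I would write $A=S\setminus T$ and enumerate its elements as $A=\{a_1,\dots,a_k\}$. The idea is to compare the two telescoping sums of marginal gains
\[
\phi(S)-\phi(S\cap T)=\sum_{j=1}^{k}\phi\bigl(a_j\mid (S\cap T)\cup\{a_1,\dots,a_{j-1}\}\bigr)
\]
and
\[
\phi(S\cup T)-\phi(T)=\sum_{j=1}^{k}\phi\bigl(a_j\mid T\cup\{a_1,\dots,a_{j-1}\}\bigr),
\]
which are valid because $S=(S\cap T)\cup A$ and $S\cup T=T\cup A$. For each index $j$ we have the inclusion $(S\cap T)\cup\{a_1,\dots,a_{j-1}\}\subseteq T\cup\{a_1,\dots,a_{j-1}\}$, and $a_j$ lies outside the larger set since $a_j\in A$ is disjoint from $T$ and distinct from $a_1,\dots,a_{j-1}$. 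Hence the marginal definition of submodularity applies term by term, so each summand on the first line is at least the corresponding summand on the second. Summing over $j$ gives $\phi(S)-\phi(S\cap T)\ge \phi(S\cup T)-\phi(T)$, which rearranges to the claimed inequality; the ``in particular'' statement then follows from $\phi(S\cap T)\ge 0$.

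For part~2, I would argue by contradiction. Since $\phi$ is monotone (a consequence of binary marginals) and integer-valued, $\phi(S)=1$ forces $\phi(S-e)\in\{0,1\}$ for every $e\in S$; if no $e$ achieves $\phi(S-e)=1$, then $\phi(S-e)=0$ for all $e\in S$. Picking two distinct elements $e_1,e_2\in S$ (possible as $|S|\ge 2$) and applying part~1 to the sets $S-e_1$ and $S-e_2$, whose union is $S$ and whose intersection is $S\setminus\{e_1,e_2\}$, yields $0=\phi(S-e_1)+\phi(S-e_2)\ge \phi(S)+\phi(S\setminus\{e_1,e_2\})\ge 1$, a contradiction.

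I expect part~1 to be the only place requiring care, and the obstacle there is purely bookkeeping: one must fix an ordering of the elements of $A=S\setminus T$ and check at each step both the inclusion of the two conditioning sets and the disjointness of the newly added element $a_j$ from the larger one, so that the single-element marginal inequality is legitimately invoked. Part~2 is then essentially immediate, its only use of the binary-marginal hypothesis being to confine $\phi(S-e)$ to $\{0,1\}$ via monotonicity and integrality.
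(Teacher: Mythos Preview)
Your proposal is correct. For part~1 you supply the standard telescoping argument that the paper omits entirely (it simply cites the inequality as ``a well-known alternative definition for submodularity''), and your bookkeeping is sound. For part~2 your route differs slightly from the paper's: the paper argues that if every $\phi(S-e)=0$ then by monotonicity every singleton in $S$ has value~$0$, and then invokes submodularity (implicitly, subadditivity) to contradict $\phi(S)=1$; you instead apply part~1 once to the pair $S-e_1$, $S-e_2$ and read off the contradiction directly. Your argument is a bit cleaner since it avoids the detour through singletons and uses part~1 in a single shot, while the paper's version makes the role of monotonicity more explicit. Both are valid and essentially equivalent in spirit.
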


\section{Existence of EFX and PO Allocations}\label{sec:additive}



In this section, we explore the existence of EFX and PO allocations for additive cost functions.
We call an instance \emph{binary} if each agent has an additive valuation with binary marginals.
We propose an algorithm that computes EFX and PO allocations for binary instances.
This is the first class with a general number of agents, for which EFX and PO allocations exist and can be computed in polynomial time.
%
In the world of allocation of goods, it has been shown that maximizing Nash social welfare (product of agents' utilities) leads to EFX and PO allocations for bivalued instances (which is a superset of binary instances)~\cite{journals/tcs/AmanatidisBFHV21,conf/sagt/GargM21}.
However, in the opposite of goods, minimizing Nash social welfare results in zero Nash social welfare which has no guarantee of fairness and efficiency.
Before our result, it has only been shown that EFX and PO allocations exist for three bivalued agents~\cite{journals/corr/abs-2212-02440}.

Given a binary instance, we divide the set of chores into $M^0$ and $M^+$, while $M^+$ includes those items that cost $1$ to all agents, i.e., $c_i(e) = 1$ for all $i\in N$.
\begin{align*}
    M^0 = \{e\in M: \exists i\in N, c_i(e) = 0\}, \\
    M^+ = \{e\in M: \forall i\in N, c_i(e) = 1\}.
\end{align*}
 In this section, we prove the following main result.
 
\begin{algorithm}[htbp]
    \caption{Finding an EFX and PO allocation for binary additive cost functions} \label{alg:additive}
    \KwIn{A binary instance $<M, N, \bc>$}
    initialize $X_i \gets \emptyset$ for all $i\in N$, $P \gets M$ \;

    \tcp{Phase 1: compute a partial allocation $\bX^0$ with $\sc(\bX^0) = 0$.}
    \For{each item $e\in M^0$}{
        pick an agent $i\in N$ such that $c_i(e) = 0$. break tie arbitrary\;
        update $X_i \gets X_i + e$\;
    }
    
    \tcp{Phase 2: allocate items in $M^+$.}
    
    initialize $P \gets M^+$ \;
    \While{$P \neq \emptyset$}{
        let $i^* \gets \argmin_{i\in N} \{c_i(X_i)\}$, break tie arbitrary\;
        update $X_{i^*} \gets X_{i^*} + e$, $P\gets P - e$\;
        \If{there exists an agent $j\neq i^*$ such that $i^*$ is not EFX towards $j$}{
            update $X_{i^*} \gets X_{i^*}-e, X_j \gets X_j + e$\;
            \While{there exists an item $e'\in X_j$ such that $c_{i^*} (e') = 0$}{
                update $X_{i^*} \gets X_{i^*} + e', X_j \gets X_j - e'$\;
            }
        }
    }
    \KwOut{$\bX = \{X_1, \cdots, X_n\}$.}
\end{algorithm}

\begin{theorem}\label{thm:additive}
    There exists an algorithm that computes EFX and PO allocations for any given binary instance in polynomial time.
\end{theorem}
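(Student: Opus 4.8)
The plan is to analyze Algorithm~\ref{alg:additive} and prove it maintains both EFX and PO as invariants, terminating in polynomial time. First I would handle Phase~1: since every item in $M^0$ has cost $0$ to at least one agent, assigning each such item to an agent who values it at $0$ yields a partial allocation $\bX^0$ with $\sc(\bX^0)=0$. This is trivially PO (social cost cannot go below zero) and trivially EFX at the end of Phase~1, since every agent's bundle costs her nothing, so $c_i(X_i - e) = 0 \leq c_i(X_j)$ for all $i,j$. The key observation is that the items in $M^+$ are genuinely \emph{unavoidable}: each such item costs exactly $1$ to every agent, so in \emph{any} allocation the social cost equals $|M^+|$. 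Thus minimizing social cost reduces to distributing the $M^+$ items, and any complete allocation with $\sc(\bX)=|M^+|$ is automatically PO.

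Next I would establish the main loop invariant for Phase~2: after each iteration the partial allocation is EFX, and crucially the social cost stays equal to the number of $M^+$ items already processed. The greedy step assigns the current $M^+$ item to an agent $i^*$ of minimum cost; if this breaks EFX towards some $j$, the item is handed to $j$ instead, and then we ``swap back'' any zero-cost (to $i^*$) items from $X_j$ into $X_{i^*}$. The purpose of this inner while-loop is to repair a subtle issue: when $i^*$ envies $j$ beyond the EFX threshold, moving the offending item to $j$ might itself create new EFX violations or raise social cost, but because the transferred-out items cost $i^*$ nothing, reclaiming them keeps $i^*$'s cost unchanged while strictly reducing $j$'s. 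The central claims to verify are (i) every transfer preserves $\sc(\bX) = |{M^+ \text{ allocated so far}}|$, which holds because we only ever move items that have cost $1$ to whoever ultimately holds them or cost $0$ to the receiver, and (ii) the reassignment restores EFX globally, not merely for the pair $(i^*, j)$.

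The hard part will be (ii): proving that after the conditional reassignment and the inner swap loop, \emph{all} pairs of agents satisfy EFX, not just the pair that triggered the repair. The delicate argument is that $i^*$ was chosen as the minimum-cost agent, so after giving away the item $i^*$ cannot envy anyone enough to violate EFX; meanwhile $j$'s cost may have risen, so I would need to argue that the zero-cost reclaiming step leaves $j$ with a bundle whose every element, when removed, drops $j$'s cost to at most what she could attain elsewhere. I expect to exploit the binary-marginal additive structure heavily here: since costs are sums of $0/1$ values, an agent with a bundle of cost $k$ envies another's bundle of cost $k'$ in an EFX-violating way only if $k - 1 > k'$, i.e. a cost gap of at least $2$, and I would show the minimum-cost selection rule bounds all such gaps by $1$. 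Finally I would confirm polynomial running time: Phase~1 is a single pass over $M^0$, and Phase~2 processes each $M^+$ item once with the inner loop bounded by bundle sizes, giving an overall polynomial bound in $m$ and $n$.
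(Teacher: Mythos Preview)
Your outline tracks the paper's proof closely: minimum social cost via the $|M^+|$ count, EFX as a loop invariant, and a case analysis on the repair step. However, your account of the inner while-loop contains a genuine misunderstanding that would prevent the argument from closing.

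You write that reclaiming the zero-cost items ``keeps $i^*$'s cost unchanged while strictly reducing $j$'s.'' It does not reduce $j$'s cost at all. The min-social-cost invariant guarantees $c_j(e')\le c_{i^*}(e')$ for every $e'\in X_j$; and when the EFX violation triggers, one can show $c_{i^*}(X_{i^*})=c_j(X_j)=c_{i^*}(X_j)$ (this is the paper's Claim~\ref{claim:all-1}), which forces $c_j(e')=c_{i^*}(e')$ itemwise on $X_j$. Hence the items with $c_{i^*}(e')=0$ are exactly the items with $c_j(e')=0$, and removing them leaves $c_j$ of $j$'s bundle unchanged. The real purpose of the inner loop is structural: after it runs, every item remaining in $j$'s bundle lies in $M^+$ and therefore costs $1$ to $j$ (and to everyone). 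That is what makes ``removing any item from $X_j'$ drops $j$'s cost below everyone else'' true.

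Relatedly, your characterization ``an EFX violation requires $k-1>k'$'' is false whenever the bundle contains a cost-$0$ item: if $c_i(e'')=0$ for some $e''\in X_i$, then $c_i(X_i-e'')=k$ and the violation condition is merely $k>k'$. This is precisely the obstruction the inner loop removes for agent $j$. So the EFX argument you sketch for $j$ will not go through as stated; you need the structural fact $X_j\setminus S\subseteq M^+$, which in turn rests on the equality $c_{i^*}(X_j)=c_j(X_j)$ derived from the trigger condition together with the min-social-cost invariant. Once that claim is in place, the rest of your plan (checking $i^*$, $j$, and every other $k$ separately; bounding $c_k(X_k)\le c_{i^*}(X_{i^*})+1$) matches the paper's proof.
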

Our algorithm starts from a partial allocation $\bX^0$ in which every agent only receives items that have zero cost to her.
In the second phase, we allocate each item $e\in M^+$ to one agent straightly if it maintains EFX property after allocating $e$, or we reallocate some items until an item $e\in M^+$ can be allocated without breaking EFX.
The steps of the full algorithm are summarized in Algorithm~\ref{alg:additive}.
\begin{lemma}\label{lemma:minmized-sc}
    Algorithm~\ref{alg:additive} returns an allocation $\bX$ with minimum social cost.
\end{lemma}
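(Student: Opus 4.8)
The plan is to pin down the exact value of the minimum social cost and then show the algorithm attains it. By additivity, for any complete allocation $\bX$ we have $\sc(\bX)=\sum_{i\in N}\sum_{e\in X_i}c_i(e)=\sum_{e\in M}c_{a(e)}(e)$, where $a(e)$ denotes the agent holding $e$. Every item $e\in M^+$ satisfies $c_i(e)=1$ for all $i\in N$, so it contributes exactly $1$ to the social cost no matter who owns it; every item $e\in M^0$ contributes $c_{a(e)}(e)\ge 0$. Hence any complete allocation has $\sc(\bX)\ge |M^+|$, with equality exactly when each $M^0$-item is assigned to an agent of zero cost for it. So it suffices to prove that the output of Algorithm~\ref{alg:additive} assigns every $M^0$-item to a zero-cost owner.

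First I would establish the invariant that, throughout the execution, every item $e\in M^0$ is held by some agent $i$ with $c_i(e)=0$. Phase~1 supplies the base case, since each $e\in M^0$ is handed to an agent $i$ with $c_i(e)=0$ and the partial allocation $\bX^0$ is not disturbed otherwise. For the inductive step I would examine the only two kinds of moves in Phase~2. Adding the current item $e\in M^+$ to $i^*$, and later shifting that same $e$ to a victim $j$, touches only an $M^+$-item, so it cannot affect the invariant. The inner while loop moves an item $e'\in X_j$ to $i^*$ only when $c_{i^*}(e')=0$; since every $M^+$-item costs $1$ to all agents, such an $e'$ must lie in $M^0$, and after the move it is owned by $i^*$ with zero cost, again consistent with the invariant. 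No operation ever deposits an $M^0$-item onto a positive-cost owner, so the invariant is preserved.

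Granting that the algorithm terminates with a complete allocation (all of $M^+$ eventually leaves $P$), the invariant forces every $M^0$-item to contribute $0$ and every $M^+$-item to contribute $1$, so $\sc(\bX)=|M^+|$, matching the lower bound and hence minimizing the social cost. Equivalently, one can track the cost directly: Phase~1 ends at cost $0$, placing $e$ on $i^*$ raises the cost by $1$, relocating $e$ to $j$ is cost-neutral, and each inner-loop move changes the cost by $c_{i^*}(e')-c_j(e')=-c_j(e')\le 0$, so each outer iteration adds at most $1$; combined with the lower bound this again forces the final cost to equal $|M^+|$. The only delicate point, and the main thing to verify carefully, is the bookkeeping of the inner loop: confirming that the only items it can migrate are zero-cost (hence $M^0$) items and that these migrations never strand an $M^0$-item with a cost-$1$ agent. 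Once that is checked, minimality of $\sc(\bX)$ is immediate.
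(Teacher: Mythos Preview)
Your proposal is correct and follows essentially the same approach as the paper: both argue that $|M^+|$ is a lower bound on the social cost of any complete allocation and that the algorithm achieves exactly this value because Phase~1 has cost $0$, each outer iteration of Phase~2 adds one $M^+$-item (cost $+1$), and all reallocations move items only to agents who value them at $0$. Your write-up is a bit more careful than the paper's in explicitly stating the lower bound and in phrasing the ``every $M^0$-item stays with a zero-cost owner'' invariant, but the underlying argument is the same.
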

    \begin{proof}
        We prove the lemma by showing that $\sc(\bX) = |M^+|$, while no complete allocation has a strictly less social cost.
        In Phase 1 we compute a partial allocation $X^0$ with zero social cost and remaining all items in $M^+$ unallocated.
        Then in Phase 2 we allocate items in $M^+$ and reallocate some items, while each allocation of item $e\in M^+$ increases the total social cost by $1$.
        Note that item $e\in M$ would be reallocated to agent $i$ only under the case that $c_i(e) = 0$.
        Hence any reallocation would not change the total social cost of the allocation.
        In conclusion, for the returned allocation $\bX$ we have $\sc(\bX) = |M^+|$.
    \end{proof}

    \begin{lemma}\label{lemma:efx-additive}
        The allocation $\bX$ returned by Algorithm~\ref{alg:additive} is EFX to all agents.
    \end{lemma}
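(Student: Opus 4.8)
The plan is to prove EFX by induction on the iterations of Phase~2, carrying alongside EFX a \emph{structural invariant} that makes every envy comparison transparent. The invariant I would maintain is: at all times each agent $i$ holds only items that either cost $0$ to $i$ or lie in $M^+$; equivalently, the only costly items anyone ever holds are items of $M^+$. This is immediate in Phase~1 (each agent receives only items of cost $0$ to her), and it is preserved in Phase~2 because the only items ever added to a bundle are items of $M^+$ (handed to $i^*$ or to $j$) or items $e'$ moved onto $i^*$ precisely when $c_{i^*}(e')=0$. The payoff, using that every $M^+$ item costs $1$ to everyone, is two clean consequences: first, $c_i(X_i)=|X_i\cap M^+|$ for every $i$; and second, crucially, $c_i(X_k)\ge c_k(X_k)$ for all $i,k\in N$ --- every agent values any bundle at least as much as its owner does, since the owner's costly items are exactly her $M^+$ items, which are costly to all.

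With this in hand, EFX becomes easy to phrase for a pair $(i,k)$: if $X_i$ contains an item of cost $0$ to $i$ then EFX reduces to $c_i(X_i)\le c_i(X_k)$, and otherwise (when $X_i\subseteq M^+$) it reduces to $c_i(X_i)-1\le c_i(X_k)$; in both cases the right-hand side is at least $c_k(X_k)=|X_k\cap M^+|$. I would also record the elementary balancing invariant $\max_i c_i(X_i)-\min_i c_i(X_i)\le 1$, which holds because every iteration increments by exactly $1$ the cost of an agent that is currently of minimum cost: the chosen $i^*$ in the no-reallocation branch, or $j$ in the reallocation branch, which I will show is itself a minimum-cost agent.

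For the induction, the base case is trivial since after Phase~1 all costs are $0$. For the step, let $i^*=\argmin_i c_i(X_i)$ have current cost $a$. In the no-reallocation branch the algorithm has explicitly certified EFX of $i^*$ towards everyone, while EFX of any other agent towards $i^*$ only becomes easier because $X_{i^*}$ grew (monotonicity), and the remaining pairs are unchanged; so EFX, the structural invariant, and balancing are all preserved. The reallocation branch is where the structural invariant earns its keep. Using $c_{i^*}(X_j)\ge c_j(X_j)\ge a$, I would first argue that a violation can arise only when $X_{i^*}$ already holds a cost-$0$ item and the offending $j$ satisfies $c_j(X_j)=c_{i^*}(X_j)=a$ --- so $j$ is also of minimum cost and all of $j$'s cost-$0$-to-$j$ items are cost-$0$ to $i^*$. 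The inner loop therefore moves exactly those items onto $i^*$, leaving $X_j'\subseteq M^+$ with $c_j(X_j')=a+1$ and $X_{i^*}'$ with $c_{i^*}(X_{i^*}')=a$ unchanged. Re-verifying EFX pairwise, the delicate cases are: $j$ towards any $k$ (now $X_j'\subseteq M^+$ has no cost-$0$ item, so only $a\le c_j(X_k)$ is required, which follows from $c_j(X_k)\ge c_k(X_k)\ge a$) and any third agent $k$ towards $j$ (here $c_k(X_j')=|X_j'\cap M^+|=a+1$ dominates $c_k(X_k)\le a+1$ by the balancing invariant); the rest follow from monotonicity and the minimum-cost choice of $i^*$.

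The genuine difficulty, I expect, is discovering the structural invariant and pinning down the exact configuration that triggers the reallocation branch --- in particular noticing that the branch fires only against a tied minimum-cost agent $j$ whose cost-$0$ items are shared with $i^*$, which is what guarantees $X_j'\subseteq M^+$ after the swap. Once the slogan ``every agent over-values every bundle relative to its owner'' is established, the per-pair EFX checks before and after the reallocation are routine.
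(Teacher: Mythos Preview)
Your proposal is correct and follows essentially the same approach as the paper: both are inductions on Phase~2 rounds that hinge on the observation that the reallocation branch only fires against a tied minimum-cost agent $j$, and that after the swap $X_j'\subseteq M^+$, followed by the same pairwise EFX checks using the gap bound $\max_i c_i(X_i)-\min_i c_i(X_i)\le 1$. The only cosmetic difference is that you maintain the structural invariant ``every agent holds only cost-$0$-to-self items or $M^+$ items'' directly by induction, whereas the paper obtains the equivalent fact $c_i(X_k)\ge c_k(X_k)$ by appealing to the separately proved minimum-social-cost property (Lemma~\ref{lemma:minmized-sc}); the two packagings yield the same inequalities at every step.
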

    \paragraph{Proof Sketch}
    We show that the allocation $\bX$ is EFX by induction and treat the beginning of Phase 2 as the basic case.
    The statement holds for the basic case since $c_i(X_i) = 0$ for all $i\in N$.
    Then we assume it holds at the beginning of some round $t$ of Phase 2 and we show that it still holds at the end of round $t$.
    According to the algorithm, we only need to consider the case under the if-condition in line $9$.
    As for agent $i$, the envy between $i$ and other agents almost stays the same since agent $i$ only receives some items that cost $0$ to her.
    Then we consider the envy between $j$ and other agents.
    An important claim we used is that $c_i(X_i) = c_j(X_j)$ holds when the if-condition from Line $9$ to Line $12$ is executed, and $X_j\setminus S \subseteq M^+$ holds after the execution.
    The EFX property stays trivially for any agent rather than $j$.
    The allocation is EFX for agent $j$ since after removing any item, she holds the minimum bundle over all agents.

Furthermore, it can be easily verified that the algorithm runs in polynomial time.
We further complete our result by exploring the non-existence of EFX and PO allocations in the additive setting.
We show that even extending our result to ternary instances\footnote{An instance is called ternary if all cost functions are additive and $c_i(e) \in \{0,1,2\}$ for all $i\in N, e\in M$.} is impossible (see Appendix~\ref{ssec:non-existence}).

\section{EFX Allocations Exist for Binary Cancelable Chores}
\label{sect:cancelable}
In this section, we show that, for cancelable cost functions with binary marginals, we can always find EFX allocations in polynomial time, but there exist instances where all EFX allocations are not Pareto-optimal.

\begin{theorem}\label{thm:main_cancelable}
    For cancelable cost functions with binary marginals, EFX allocations always exist and can be computed in polynomial time.
\end{theorem}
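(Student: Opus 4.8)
The plan is to generalize the two-phase algorithm of the additive case (Algorithm~\ref{alg:additive}) and to replace every use of additivity by two consequences of the cancelable and submodular structure. First, recall that by Theorem~\ref{thm:cancelable-submodular} a binary cancelable function is submodular, so I may freely use submodular tools; in particular, if $c_i(S)<|S|$ then some $e\in S$ has $c_i(S-e)=c_i(S)$ (a \emph{redundant} item, by a standard submodular marginal-monotonicity argument, as in Proposition~\ref{prop:submodular}). Second, the defining implication of cancelability, read contrapositively, gives the \emph{order-preservation} property $c_i(S)\le c_i(T)\Rightarrow c_i(S\cup U)\le c_i(T\cup U)$ for every $U$ disjoint from $S\cup T$; together with Proposition~\ref{prop:cancelable}(2) this lets me transport both cost comparisons and cost equalities through the addition of common items. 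A first payoff is that the partition $M=M^0\cup M^+$ still behaves well: if $c_i(e)=0$ then Proposition~\ref{prop:cancelable}(2) (with $S=\{e\}$, $T=\emptyset$) shows $e$ is \emph{free} to $i$ in every context, i.e.\ $c_i(U+e)=c_i(U)$ whenever $e\notin U$. Hence Phase~1 is unchanged: hand each $M^0$ item to an agent who values it at $0$, after which every bundle has cost $0$ and these items never interfere with later costs.

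In Phase~2 I process the items of $M^+$ one at a time while maintaining the invariant that the current partial allocation is EFX. To insert an item $e$, tentatively give it to a minimum-cost agent $i^*=\argmin_i c_i(X_i)$. Adding cost to $i^*$ can only weaken every other agent's envy toward $i^*$ (by monotonicity), so the only possible new violation is that $i^*$ is not EFX toward some $j$. If this happens I reassign $e$ to $j$ and then, exactly as in lines~11--12 of Algorithm~\ref{alg:additive}, repeatedly pull from $X_j$ into $X_{i^*}$ any item that is free to $i^*$. The design goal is that afterward agent $j$ holds a cost-minimal bundle (so she is automatically EFX toward everyone, since removing an item cannot increase her cost), while $i^*$ has only gained items that are free to her, so her cost, and hence all envy relations involving $i^*$, are unchanged up to the removal of $e$.

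The crux --- and the place where additivity was previously used --- is the key lemma that this reallocation restores and preserves EFX. Mirroring the additive proof (Lemma~\ref{lemma:efx-additive}), the heart of the argument is to show that when the violation is triggered we actually have the equality $c_{i^*}(X_{i^*})=c_j(X_j)$ at that moment: $i^*$ being a pre-insertion minimizer forces the costs to be tied, and the binary-marginal EFX condition pins down exactly how adding one $M^+$ item can break EFX. I would then use Proposition~\ref{prop:cancelable}(1)--(2) to argue that moving $e$ to $j$ and pulling out the free items preserves this equality and leaves $X_j$ minimal, and use order-preservation to show that no agent $k$ is pushed from EFX to non-EFX, since the only bundles that change are $X_{i^*}$ (which grows only by items free to $i^*$) and $X_j$ (which becomes cost-minimal). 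The subtle point, absent in the additive setting, is that a cancelable bundle may contain items of zero marginal, so EFX toward $j$ sometimes demands full envy-freeness $c_{i^*}(X_{i^*})\le c_{i^*}(X_j)$ rather than the weaker up-to-one-item bound; handling this uniformly --- for instance by keeping each agent's own bundle non-redundant, or by casing on whether $X_{i^*}$ contains a redundant item --- is where I expect the main technical effort to lie.

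Finally, polynomial running time should be routine: each outer iteration permanently places one item of $M^+$, and each inner pull-back loop moves at most $|M|$ items, so the procedure terminates after $O(|M|^2)$ reallocations. Note that, unlike the additive case, the analysis cannot also rely on a social-cost potential, since Theorem~\ref{thm:cancelable-po} shows EFX is incompatible with Pareto-optimality here; the correctness proof must therefore rest on the EFX invariant alone.
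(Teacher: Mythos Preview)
Your plan is \emph{not} the route the paper takes, and the gap you flag at the end (``where I expect the main technical effort to lie'') is a real obstacle that your sketch does not close.

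The paper abandons Algorithm~\ref{alg:additive} entirely for the cancelable case. Its Phase~1 does something different from yours: instead of handing out the $M^0$ items, it repeatedly finds $n$ items that have marginal cost~$1$ \emph{to every agent given her current bundle} and gives one to each agent. The crucial payoff (Lemma~\ref{lem:phase1}) is that afterward $c_i(A_j)=w$ for \emph{every} pair $i,j$: all $n$ bundles are interchangeable from every agent's point of view. Phase~2 then works with the residual costs $d_i(\cdot)=c_i(\cdot\mid A_i)$ and maintains the very strong invariant $d_i(B_i)\le 1$ throughout, via a small case analysis (Proposition~\ref{prop:keyproperties}) that only uses submodularity. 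Combining the two phases is where cancelability enters: the symmetry $c_i(A_i)=c_i(A_j)$ and Proposition~\ref{prop:cancelable}(2) let the paper transport comparisons of $B$-bundles under $\bd$ to comparisons of $X$-bundles under $\bc$. This ``equalize first, then add at most one unit'' decomposition is precisely what substitutes for the social-cost argument you correctly identify as unavailable.

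By contrast, your proposal keeps Algorithm~\ref{alg:additive} and tries to re-prove the fix-up lemma. The difficulty is genuine. In the additive proof, the equality $c_{i^*}(X_{i^*})=c_j(X_j)$ and the conclusion that all pulled items are free to \emph{both} $i^*$ and $j$ (Claim~\ref{claim:all-1}) come from the fact that every partial allocation produced by Algorithm~\ref{alg:additive} minimizes social cost, which in particular forces $c_j(X_j)\le c_{i^*}(X_j)$. With cancelable (non-additive) functions this inequality can fail --- for instance, if $j$ is additive and $i^*$ has a budget-capped function, then $c_{i^*}(X_j)$ may sit strictly below $c_j(X_j)$ --- so neither the equality nor the structure of $X_j\setminus S$ follows. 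Worse, once the cap of $c_{i^*}$ is hit, the new item $e$ may have marginal~$0$ both in $X_{i^*}$ and in $X_j$ from $i^*$'s perspective, so moving $e$ to $j$ and pulling free items can leave $c_{i^*}(X_j')=c_{i^*}(X_j)$ while $i^*$ now holds a redundant item; the EFX condition $c_{i^*}(X_{i^*}'-e')\le c_{i^*}(X_j')$ for that redundant $e'$ then reduces to envy-freeness, which you have not established. Your suggested patches (keep bundles non-redundant, or case on redundancy) may eventually lead to a correct algorithm, but they change the procedure and you have not specified how, so as written the proposal is a plan rather than a proof.
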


\begingroup
\removelatexerror
\begin{algorithm}[H]
    \caption{Finding an EFX allocation for cancelable cost functions with binary marginals} \label{alg:cancelable}
		\KwIn{A binary instance $(M, N, \bc)$}
		\tcp{Phase 1: allocate items with marginal cost~1 from all agents' perspective}
		initialize $(A_1,\ldots,A_n)$ with $A_i=\emptyset$ for all $i$\;
        \While{there exist $n$ items such that each item $e$ satisfies $c_i(e\mid A_i)=1$ for all $i$}{
            add those $n$ items to $(A_1,\ldots,A_n)$ such that each $A_i$ is added one item\;
        }
        let $w=|A_1|=\cdots=|A_n|$\;
		\tcp{Phase 2: allocate the remaining items}
        initialize $(B_1,\ldots,B_n)$ where $B_i=\emptyset$ for each $i$\;
        consider the cost function $\bd=(d_1,\ldots,d_n)$ with $d_i(S)=c_i(S\mid A_i)$\;
        let $M_1=\{e\in M: d_i(e)=1\mbox{ for all }i\}$\; \tcp{We have $|M_1|<n$, for otherwise Phase~1 should not have been terminated.}
		for each $i=1,\ldots,|M_1|$, let $B_i$ be the set containing one (arbitrary) item of $M_1$\;
        \While{there is an unallocated item $e$}{
        \tcp{We maintain that $(B_1,\ldots,B_n)$ is EFX with respect to $\bd$ and $d_i(B_i)\leq 1$ for any $i$.}
        \eIf{$d_i(e\mid B_i)=0$ for some agent $i$ and $(B_1,\ldots,B_n)$ is EFX after updating $B_i\leftarrow B_i+e$}{
            update $B_i\leftarrow B_i+e$\;
        }{
            \tcp{If we reach here, there must exist $i$ and $j$, with the possibility $i=j$, such that $d_i(B_j)=0$ (Proposition~\ref{prop:existzero}).}
            \eIf{there exists $i\in N$ with $d_i(B_i)=0$}{
            if there exists $j\neq i$ with $d_i(B_j)=0$, update $B_i\leftarrow B_i\cup B_j$ and $B_j\leftarrow\{e\}$;
            otherwise, update $B_i\leftarrow B_i+e$\;
            }{
            find $j$ such that $d_i(B_j)=0$, and swap $B_i$ and $B_j$\;
            }
        }}
		\KwOut{$\bX = (X_1, \ldots, X_n)$ where $X_i=A_i\cup B_i$ for each $i$.}
\end{algorithm} 
\endgroup

Our algorithm is presented in Algorithm~\ref{alg:cancelable}.
The algorithm consists of two phases.
In the first phase, we iteratively allocate $n$ items, where each agent is believed to have a marginal cost of $1$, such that each agent is allocated exactly one item.
Let $w$ be the number of items allocated to each agent, and we have allocated $wn$ items.
After Phase~1, the number of items with marginal costs $1$ to all agents is then less than $n$.
Let $M_1$ be the set of these items.
Let $(A_1,\ldots,A_n)$ be the allocation of Phase~1.
After the first phase, each agent believes that her own bundle costs $w$ and each other agent's bundle also costs $w$, as the lemma below states.

\begin{lemma}\label{lem:phase1}
    After Phase~1, we have $c_i(A_j)=w$ for each pair of $i$ and $j$, with the possibility $i=j$.
\end{lemma}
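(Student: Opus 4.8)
The plan is to prove a slightly stronger, round-indexed statement by induction on the number of rounds executed in the while loop of Phase~1. Let $w$ be the total number of rounds, and for $0\le t\le w$ let $A_i^{(t)}$ denote the bundle of agent $i$ after $t$ rounds, so that $A_i^{(0)}=\emptyset$ and $A_i=A_i^{(w)}$. I claim that
\[
c_i\bigl(A_j^{(t)}\bigr)=t \quad\text{for every }t\text{ and every ordered pair }(i,j),
\]
including the case $i=j$. Taking $t=w$ then yields the lemma. (Note that the bundles all have the same cardinality $t$ after $t$ rounds, since each round adds exactly one item to each $A_i$, which is what makes $w$ well defined.)

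The base case $t=0$ is immediate because $c_i(\emptyset)=0$. For the inductive step, assume $c_i(A_j^{(t-1)})=t-1$ for all $i,j$. In round $t$ the algorithm selects $n$ fresh items $f_1,\dots,f_n$, each of marginal cost $1$ to every agent with respect to the current bundles, i.e.\ $c_i(f_k\mid A_i^{(t-1)})=1$ for all $i,k$, and sets $A_k^{(t)}=A_k^{(t-1)}+f_k$ (where $f_k$ is the item handed to agent $k$). The diagonal case is direct: $c_i(A_i^{(t)})=c_i(A_i^{(t-1)})+c_i(f_i\mid A_i^{(t-1)})=(t-1)+1=t$.

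The off-diagonal case is where cancelability does the work. Fix $i\neq j$. By the induction hypothesis applied to agent $i$, we have $c_i(A_i^{(t-1)})=c_i(A_j^{(t-1)})=t-1$, so the two bundles have equal cost to $i$. Since $f_j$ is a fresh item, $f_j\notin A_i^{(t-1)}\cup A_j^{(t-1)}$, and Proposition~\ref{prop:cancelable}(1) applied to the cancelable function $c_i$ gives $c_i(A_i^{(t-1)}+f_j)=c_i(A_j^{(t-1)}+f_j)$. The left-hand side equals $c_i(A_i^{(t-1)})+c_i(f_j\mid A_i^{(t-1)})=(t-1)+1=t$, while the right-hand side is exactly $c_i(A_j^{(t)})$; this closes the induction.

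The crux—and the reason cancelability rather than mere binary marginals is required—is this transfer step: the selection rule of Phase~1 only guarantees that $f_j$ has marginal cost $1$ relative to agent $i$'s own bundle $A_i^{(t-1)}$, not relative to $A_j^{(t-1)}$. Proposition~\ref{prop:cancelable}(1) is precisely the tool that upgrades ``two bundles cost the same to $i$'' into ``the same item affects both bundles identically for $i$,'' and the induction hypothesis supplies exactly that equal-cost premise at each round. I expect this to be the only nontrivial point; the remainder is bookkeeping across rounds.
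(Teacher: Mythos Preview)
Your proof is correct and follows essentially the same approach as the paper: induction on the number of Phase~1 rounds, with the inductive step driven by Property~1 of Proposition~\ref{prop:cancelable} (equal-cost bundles receive equal increments) together with the induction hypothesis $c_i(A_i^{(t-1)})=c_i(A_j^{(t-1)})$. The paper's write-up is terser and does not separate the diagonal and off-diagonal cases, but the argument is the same.
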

\begin{proof}
    This can be proved by straightforward induction. See Appendix~\ref{sect:missingproofcancelable} for the full proof.
\end{proof}

In Phase 2, we compute an allocation $(B_1,\ldots,B_n)$ of the remaining items.
We update the cost function $c_i$ such that, for each unallocated set $S$ of items, the cost of $S$ is given by $c_i(S\mid A_i)$.
Let $\bd$ be the set of the updated cost functions.
We will show that the allocation $(B_1,\ldots,B_n)$ output in Phase~2 satisfies the following two key properties:
\begin{itemize}
    \item $d_i(B_i)\leq 1$ for each agent $i$.
    \item $(B_1,\ldots,B_n)$ is EFX with respect to $\bd$.
\end{itemize}

Lastly, the final allocation $\bX=(X_1, \ldots, X_n)$ is given by $X_i=A_i\cup B_i$ for each $i\in N$.
We will show that $\bX$ is EFX with respect to $\bc$.
The cancelable property plays an important role in guaranteeing EFX property when combining the allocations in the two phases.
Specifically, the cancelable property and Lemma~\ref{lem:phase1} ensure a good interpolation between the two cost function sets $\bd$ and $\bc$.

Before proving the key properties of Phase~2 allocation, we first state the following proposition whose proof is straightforward.

\begin{proposition}\label{prop:d_i}
    Each $d_i$ is a cancelable and submodular set function with binary marginals.
\end{proposition}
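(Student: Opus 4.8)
The plan is to verify each of the three claimed properties of $d_i$ — cancelability, submodularity, and the binary marginal condition — directly from the definition $d_i(S)=c_i(S\mid A_i)=c_i(S\cup A_i)-c_i(A_i)$, using the fact that $c_i$ itself is cancelable with binary marginals (an earlier hypothesis of the theorem) and that $A_i$ is a fixed set disjoint from the domain of $d_i$ (the unallocated items). The central observation is that $d_i$ is just $c_i$ restricted to bundles of the form $S\cup A_i$ and shifted down by the constant $c_i(A_i)$, so all the structural inequalities of $c_i$ should transfer essentially verbatim.

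First I would establish the binary marginal property. For an unallocated item $e$ and a set $S$ of unallocated items with $e\notin S$, compute
\[
d_i(e\mid S)=d_i(S+e)-d_i(S)=c_i((S+e)\cup A_i)-c_i(S\cup A_i)=c_i(e\mid S\cup A_i),
\]
where the constant $c_i(A_i)$ cancels. Since $c_i$ has binary marginals, $c_i(e\mid S\cup A_i)\in\{0,1\}$, so $d_i(e\mid S)\in\{0,1\}$. Next I would verify cancelability: for unallocated bundles $S,T$ and an unallocated item $e\notin S\cup T$, I assume $d_i(S+e)>d_i(T+e)$ and want $d_i(S)>d_i(T)$. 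Adding the constant $c_i(A_i)$ to both sides shows this is equivalent to $c_i((S+e)\cup A_i)>c_i((T+e)\cup A_i)$, i.e. $c_i((S\cup A_i)+e)>c_i((T\cup A_i)+e)$ (valid because $e\notin A_i$). Applying cancelability of $c_i$ to the bundles $S\cup A_i$ and $T\cup A_i$ and the item $e$ gives $c_i(S\cup A_i)>c_i(T\cup A_i)$, which after subtracting the constant is exactly $d_i(S)>d_i(T)$.

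For submodularity I would proceed the same way: for unallocated sets $S\subseteq T$ and an unallocated item $e\notin T$, the shift yields $d_i(e\mid S)=c_i(e\mid S\cup A_i)$ and $d_i(e\mid T)=c_i(e\mid T\cup A_i)$, and since $S\cup A_i\subseteq T\cup A_i$ the submodularity of $c_i$ (which holds because, by Theorem~\ref{thm:cancelable-submodular}, every binary-marginal cancelable function is submodular) gives $c_i(e\mid S\cup A_i)\geq c_i(e\mid T\cup A_i)$, hence $d_i(e\mid S)\geq d_i(e\mid T)$. I would note that $d_i$ is integer-valued and nonnegative (its marginals are in $\{0,1\}$ and $d_i(\emptyset)=0$), matching the $\phi:M\to\mathbb{Z}_{\geq 0}$ signature used in Propositions~\ref{prop:cancelable} and \ref{prop:submodular} so that those results may be invoked for $d_i$ later in Phase~2.

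I do not anticipate a genuine obstacle here, which is consistent with the paper's remark that the proof is straightforward; the only point requiring care is bookkeeping about domains — every set and item appearing in the definition of $d_i$ must be disjoint from $A_i$, so that $(S+e)\cup A_i=(S\cup A_i)+e$ and the cancelation of the additive constant $c_i(A_i)$ are both legitimate. Since $\bd$ is defined on the unallocated items (those not in any $A_j$, in particular not in $A_i$), this disjointness is automatic, and the three verifications above complete the proof.
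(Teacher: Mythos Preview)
Your proposal is correct and is exactly the kind of direct verification the paper has in mind when it calls the proof ``straightforward'' (the paper omits the details entirely). The only minor simplification available is that once you have established cancelability and binary marginals for $d_i$, you could invoke Theorem~\ref{thm:cancelable-submodular} directly on $d_i$ rather than first applying it to $c_i$ and then transferring; but your route through the submodularity of $c_i$ is equally valid and just as short.
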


Now we are ready to show our main lemma for Phase~2.
It states that the algorithm will always be terminated and the allocation output satisfies our two key properties.

\begin{lemma}\label{lem:phase2}
    The while-loop in Phase~2 is executed for at most $2m$ iterations, and the output allocation $(B_1,\ldots,B_n)$ satisfies that
    \begin{enumerate}
        \item $d_i(B_j)\leq 1$ for each agent $i$, and
        \item $(B_1,\ldots,B_n)$ is EFX with respect to $\bd$.
    \end{enumerate}
\end{lemma}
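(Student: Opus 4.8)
The plan is to analyze the while-loop in Phase~2 and prove both the termination bound and the two invariant properties simultaneously by induction on the iterations. First I would set up the loop invariant carefully: at the start of each iteration, the partial allocation $(B_1,\ldots,B_n)$ is EFX with respect to $\bd$ and satisfies $d_i(B_i)\leq 1$ for every $i$. The base case holds by the initialization: after assigning each item of $M_1$ to its own singleton bundle (recall $|M_1|<n$), each $B_i$ is either empty or a single item with $d_i(B_i)\le 1$, and the allocation is trivially EFX since every nonempty bundle has exactly one item.

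Next I would show the invariant is preserved across each of the three branches of the loop body. The easy branch is when some agent $i$ has $d_i(e\mid B_i)=0$ and adding $e$ keeps EFX: here $d_i(B_i)$ does not increase (marginal zero), so $d_i(B_i)\leq 1$ is maintained, and EFX holds by the explicit check. The substantive branches arise when no such ``free'' placement exists; by Proposition~\ref{prop:existzero} (which I would invoke as the key structural fact) there exist $i,j$ with $d_i(B_j)=0$. I would treat the two sub-cases: (a) if some agent $i$ already has $d_i(B_i)=0$, then either we merge $B_j$ into $B_i$ and hand $e$ to $j$ as a fresh singleton, or we simply append $e$ to $B_i$; in both situations the receiving bundle's cost to its owner stays at most $1$ (using $d_i(B_i\cup B_j)$ remains $0$ when both pieces cost $0$, via the submodular/binary-marginal structure of $d_i$ from Proposition~\ref{prop:d_i}), and the EFX condition needs to be re-verified for the modified bundles. (b) If no agent values her own bundle at $0$, so every $d_i(B_i)=1$, then we locate $j$ with $d_i(B_j)=0$ and swap $B_i$ and $B_j$; after the swap agent $i$ holds a bundle costing $0$ to her, strictly decreasing her cost, while we must check EFX is not violated for the other agent. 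The cost bound $d_i(B_i)\le 1$ is immediate in each case since the only bundles that grow absorb pieces of marginal cost $0$.

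For the termination bound of $2m$ iterations, I would use a potential/progress argument rather than counting items directly, since some iterations do not allocate a new item (the swap branch and the merge branch reshuffle existing bundles). The natural potential combines the number of allocated items with the total cost $\sum_i d_i(B_i)$: each ``allocate'' branch permanently places one of the $m$ items, accounting for at most $m$ iterations, while each ``merge'' or ``swap'' branch strictly reduces the number of agents whose bundle has positive cost (a swap converts a cost-$1$ owner into a cost-$0$ owner; a merge empties one bundle), which can happen at most $m$ times before another allocation step is forced. Summing gives the $2m$ bound. The main obstacle I anticipate is the EFX verification in the merge and swap branches: unlike the ``free placement'' branch there is no explicit EFX test in the pseudocode, so I must argue from the invariant and the binary-marginal submodular structure (Proposition~\ref{prop:submodular}, part~2, letting me remove an item from a cost-$1$ bundle while keeping cost $1$) that no envy-up-to-any-item violation is introduced. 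Establishing that the redistributed bundles preserve EFX against all agents — not merely the two directly involved — is the delicate step, and I expect it to rely crucially on the fact that every owner's bundle costs at most $1$ under her own $\bd$, so removing any single item drops her cost to $0$, which no other bundle can undercut.
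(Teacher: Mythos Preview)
Your overall plan matches the paper's: establish the two invariants by induction over iterations (this is Proposition~\ref{prop:keyproperties}), invoke Proposition~\ref{prop:existzero} to justify the else branch, and then argue termination separately. However, two concrete steps in your proposal would not go through as written.

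\textbf{Termination.} You classify the ``merge'' sub-branch (Line~14: $B_i\leftarrow B_i\cup B_j$, $B_j\leftarrow\{e\}$) as a non-allocating step and propose a potential that drops on merges and swaps. But the merge branch \emph{does} allocate the item $e$ (to agent $j$), and your claim that a merge ``empties one bundle'' or reduces the number of positive-cost owners is false: $B_j$ becomes the singleton $\{e\}$, which may have $d_j(\{e\})=1$, so the count of positive-cost owners can even increase. The paper's argument is simpler and avoids any potential function: the \emph{only} iteration that fails to allocate an item is the swap at Line~16, and immediately after a swap we have $d_i(B_i)=0$, so the next iteration necessarily enters Line~11 or Line~14 and allocates an item. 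Thus any two consecutive iterations allocate at least one item, giving the $2m$ bound directly.

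\textbf{EFX preservation.} Your closing argument asserts that since every owner satisfies $d_i(B_i)\le 1$, ``removing any single item drops her cost to $0$.'' This is false for submodular binary functions; indeed Proposition~\ref{prop:submodular}(2), which you yourself cite a sentence earlier, says precisely that some item can be removed while the cost remains $1$. The paper's argument is more delicate: whenever an agent $k$ has $d_k(B_k)=1$ and the inductive EFX invariant holds toward some bundle $B_\ell$ with $d_k(B_\ell)=0$, the EFX condition itself forces $d_k(B_k-f)\le 0$ for every $f\in B_k$, and then Proposition~\ref{prop:submodular}(2) forces $|B_k|=1$. It is this singleton structure (not a blanket ``drop to $0$'' claim) that makes the EFX checks go through after a merge or swap, in particular for third-party agents $k\notin\{i,j\}$ whose EFX condition toward the new singleton $\{e\}$ must be re-verified.
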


We first show that Property~1 and 2 above are always satisfied.

\begin{proposition}\label{prop:keyproperties}
    After any number of while-loop iterations of Phase~2, the two properties 1 and 2 in Lemma~\ref{lem:phase2} are satisfied.
\end{proposition}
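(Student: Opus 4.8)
The plan is to prove the statement as an invariant, by induction on the number of while-loop iterations in Phase~2, taking the invariant to be Properties~1 and~2 of Lemma~\ref{lem:phase2} (namely $d_i(B_i)\le 1$ for every $i$, and EFX with respect to $\bd$). The base case is the configuration right after initialization: by construction each nonempty bundle $B_i$ is a singleton drawn from $M_1$ (so $d_i(B_i)\le 1$), all other bundles are empty, and removing the single item from any $B_i$ leaves $\emptyset$; hence both properties hold trivially. For the inductive step I assume both properties hold at the start of an iteration and verify they still hold at its end, splitting into the four mutually exclusive branches of Algorithm~\ref{alg:cancelable}: (A)~direct insertion (the outer \texttt{if}); (B)~the merge; (C)~insertion into a zero-cost bundle; and (D)~the swap. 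Throughout I will use that each $d_i$ is submodular with binary marginals (Proposition~\ref{prop:d_i}), together with both parts of Proposition~\ref{prop:submodular}, and that Proposition~\ref{prop:existzero} guarantees the outer \texttt{else} always admits a valid pair to act on.

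The two insertion branches are routine. In branch~(A) the algorithm performs $B_i\leftarrow B_i+e$ only after explicitly checking that EFX is preserved, so Property~2 is immediate; and since $d_i(e\mid B_i)=0$ we get $d_i(B_i+e)=d_i(B_i)\le 1$, giving Property~1 as no other bundle changes. In branch~(C) we have an agent $i$ with $d_i(B_i)=0$ but $d_i(B_j)\ge 1$ for every $j\neq i$, and we set $B_i\leftarrow B_i+e$. Property~1 holds because $d_i(B_i)=0$ and binary marginals force $d_i(B_i+e)\le 1$. For Property~2, every other agent's view of $B_i$ only grows, so their EFX is preserved by monotonicity; and agent $i$'s own EFX holds since $d_i((B_i+e)-e')\le d_i(B_i+e)\le 1\le d_i(B_j)$ for every $j\neq i$, using exactly the branch condition.

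Branch~(B), the merge, is the first delicate case: here $d_i(B_i)=d_i(B_j)=0$, $B_i\leftarrow B_i\cup B_j$ and $B_j\leftarrow\{e\}$. Property~1 follows from submodularity, since Proposition~\ref{prop:submodular}(1) gives $d_i(B_i\cup B_j)\le d_i(B_i)+d_i(B_j)=0$, while $d_j(\{e\})\le 1$ by binary marginals. For Property~2, the owner $i$ now values its bundle at $0$ (so envies no one), $j$ holds a singleton (so is EFX towards everyone), and every third agent's EFX towards the enlarged $B_i\cup B_j$ is preserved by monotonicity. The one genuinely nontrivial check is that every third agent $k$ remains EFX towards the new singleton $\{e\}$, i.e.\ $d_k(B_k-e')\le d_k(e)$ for all $e'\in B_k$. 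This is clear when $d_k(e)=1$; the hard part is $d_k(e)=0$, where submodularity first yields $d_k(e\mid B_k)=0$. I will rule out the dangerous configuration $d_k(B_k)=1$ with $|B_k|\ge 2$ by contradiction: Proposition~\ref{prop:submodular}(2) would give some $e^\ast$ with $d_k(B_k-e^\ast)=1$, whence the inductive EFX hypothesis forces $d_k(B_m)\ge 1$ for all $m\neq k$; but then inserting $e$ into $B_k$ would preserve EFX, so branch~(A) would have fired with agent $k$, contradicting that we are in the outer \texttt{else}. Hence $d_k(B_k)=0$ or $|B_k|\le 1$, and in either case $d_k(B_k-e')=0=d_k(e)$, establishing the remaining inequalities.

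Branch~(D), the swap, reduces to a short structural observation. Reaching it means no agent values its own bundle at $0$, so by Property~1 every $d_a(B_a)=1$, and Proposition~\ref{prop:existzero} supplies a pair $i\neq j$ with $d_i(B_j)=0$. Because $d_i(B_i)=1$ while EFX forces $d_i(B_i-e')\le d_i(B_j)=0$ for every $e'\in B_i$, Proposition~\ref{prop:submodular}(2) excludes $|B_i|\ge 2$, so $B_i$ is a singleton. Consequently, after swapping, agent $i$ receives a bundle it values at $0$ and agent $j$ receives a singleton, giving $d_j(B_i)\le 1$ (Property~1) and trivial EFX for both, while all other pairwise comparisons are unchanged since the swap merely relabels two bundles. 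I expect the main obstacle of the whole proof to be precisely the third-party EFX check in the merge branch: creating the singleton $\{e\}$ can in principle make a third agent envy it, and the only way to exclude this is to combine the critical-element property of binary submodular functions (Proposition~\ref{prop:submodular}(2)) with the contrapositive of the branch~(A) condition, as sketched above. Once all four branches are verified, the invariant propagates through every iteration and the proposition follows.
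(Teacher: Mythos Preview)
Your proof is correct and follows essentially the same approach as the paper's: induction on while-loop iterations with a case split over the four update branches, using submodularity of $d_i$ (Proposition~\ref{prop:d_i}), both parts of Proposition~\ref{prop:submodular}, and the failure of the Line~10 condition to handle the delicate third-party check after a merge. The only cosmetic difference is that the paper argues the merge case directly (first finding some $\ell$ with $d_k(B_\ell)=0$, then deducing $|B_k|=1$), whereas you run the equivalent contrapositive (assuming $d_k(B_k)=1$, $|B_k|\ge 2$ and deriving that branch~(A) would have fired); logically these are the same step.
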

\begin{proof}
    Before entering the while-loop, exactly $|M_1|$ bundles contain one item with cost $1$ from all agent's perspectives, and the remaining $n-|M_1|$ bundles are empty. The properties clearly hold.
    Next, suppose the properties hold before a while-loop execution, we will show that they continue to hold after.
    We will check all the updates, at Line~11, 14, and 16, do not invalidate the properties.
    
    The update at Line~11 adds an item with marginal cost $0$ to the bundle $B_i$, so Property~1 continues to hold.
    It will not invalidate Property~2 by the if-condition.
    
    For the update at Line~14, we have $d_i(B_i\cup B_j)\leq d_i(B_i)+d_i(B_j)=0$ by submodularity of $d_i$ (Proposition~\ref{prop:d_i}), $d_i(B_i+e)\leq 1$, and $d_j(e)\leq 1$. In all cases, Property~1 continues to hold.
    To check Property~2, if there does not exist $j\neq i$ with $d_i(B_j)=0$, then agent $i$ receives $B_j+e$ and the allocation for the remaining agents is unchanged.
    In this case, $d_i(B_j)\geq1$, and agent $i$ does not envy any other agent by receiving $B_i+e$, which has cost at most $1$. The EFX condition between any other agent and $i$ still holds as $i$ receives a superset of $B_i$.
    If $d_i(B_j)=0$ for some $j$, agent $i$ receives $B_i\cup B_j$ and agent $j$'s bundle is updated to the singleton $\{e\}$.
    We have $d_i(B_i\cup B_j)\leq d_i(B_i)+d_i(B_j)=0$, so $i$ does not envy any other agent. On the other hand, $j$ receives only a single item $e$, she will no longer envy any other agent if this item were removed.
    Consider any other agent $k$ that is neither $i$ nor $j$.
    The EFX condition between $k$ and $i$ continues to hold, as $i$ now receives a superset of $B_i$.
    As for the EFX condition between $k$ and $j$, we discuss two cases.
    If $d_k(e)=1$, then agent $k$ does not envy agent $j$ as we have $d_k(B_k)\leq 1$ by Property~1.
    If $d_k(e)=0$, we have $d_k(e\mid B_k)\leq d_k(e)=0$ by submodularity.
    Since the if-condition at Line~10 fails (otherwise, the ``else'' part will not be executed), adding $e$ to $B_k$ will break the EFX condition between $k$ and some other agent $\ell$.
    By Property~1, it must be that $d_k(B_k)=1$ and $d_k(B_\ell)=0$.
    Moreover, $d_k(B_k-f)=0$ for any $f\in B_k$ (in fact, it must be that $B_k=\{f\}$ by Property~2 in Proposition~\ref{prop:submodular}).
    In this case, the EFX condition between $k$ and any other agent still holds.
    
    For the update at Line~16, if we have ever reached here, it must be that $d_i(B_i)=1$ and $d_i(B_j)=0$.
    Moreover, we must have $|B_i|=1$. Otherwise, by Property~2 of Proposition~\ref{prop:submodular}, the EFX property between $i$ and $j$ will be violated.
    After the update at Line~16, we have $d_i(B_i)=0$ and $d_j(B_j)\leq1$, and the latter holds because $B_j$, which is previously $B_i$, contains only one item.
    This proves Property~1.
    To check Property~2, $i$ does not envy any other agents as $d_i(B_i)$ is now $0$. The EFX condition between $j$ and any other agents holds as $j$'s bundle contains only one item now. The EFX condition between any other agent $k$ and $i$ or $j$ still holds as we have only swapped the bundles of $i$ and $j$.
\end{proof}

To show the algorithm terminates, we prove the following observation which is also stated between Line~12 and Line~13 of the algorithm.

\begin{proposition}\label{prop:existzero}
    If the ``else'' part from Line~12 to Line~16 is executed, there must exist $i$ and $j$, with the possibility $i=j$, such that $d_i(B_j)=0$.
\end{proposition}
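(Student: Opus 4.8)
The plan is to argue by contradiction. I would assume the desired conclusion fails, i.e.\ that $d_i(B_j)\geq 1$ for \emph{every} ordered pair $(i,j)$ (this covers the $i=j$ case too), and derive a contradiction with the fact that we are inside the ``else'' branch, which by definition means the if-condition at Line~10 failed for the item $e$ being placed. Recall that failure of Line~10 says: for every agent $k$, either $d_k(e\mid B_k)=1$, or adding $e$ to $B_k$ breaks EFX with respect to $\bd$. The standing invariant from Property~1 (Proposition~\ref{prop:keyproperties}) gives $d_k(B_k)\leq 1$ at the start of the current iteration, which I will combine with the contradiction hypothesis below.

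Next I would exhibit a ``cheap'' agent for $e$. Since all items of $M_1$ are allocated before the while-loop (Line~9) and no item is ever returned to the unallocated pool during the loop, the current item satisfies $e\notin M_1$; hence by the definition of $M_1$ there is some agent $k_0$ with $d_{k_0}(e)=0$. Because each $d_i$ is submodular (Proposition~\ref{prop:d_i}) and $\emptyset\subseteq B_{k_0}$, I get $d_{k_0}(e\mid B_{k_0})\leq d_{k_0}(e)=0$, so $d_{k_0}(e\mid B_{k_0})=0$. Thus the first disjunct of the negated Line~10 condition fails for $k_0$, so in order to be in the ``else'' branch it must be that assigning $e$ to $k_0$ breaks EFX.

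Finally I would show this is impossible under the hypothesis, yielding the contradiction. Adding $e$ to $B_{k_0}$ only \emph{increases} $d_k(B_{k_0}+e)\geq d_k(B_{k_0})$ for every other agent $k$ by monotonicity, so it cannot break the EFX condition of any agent \emph{towards} $k_0$; hence a break would have to come from $k_0$'s own side, i.e.\ there are $f\in B_{k_0}+e$ and some $j$ with $d_{k_0}((B_{k_0}+e)-f)>d_{k_0}(B_j)$. But $d_{k_0}(e\mid B_{k_0})=0$ forces $d_{k_0}(B_{k_0}+e)=d_{k_0}(B_{k_0})\leq 1$ (Property~1), so by monotonicity $d_{k_0}((B_{k_0}+e)-f)\leq 1$, while the contradiction hypothesis gives $d_{k_0}(B_j)\geq 1$; therefore $d_{k_0}((B_{k_0}+e)-f)\leq 1\leq d_{k_0}(B_j)$, ruling out any break. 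Consequently the EFX-preserving assignment of $e$ to $k_0$ was available, contradicting that Line~10 failed, so the hypothesis $d_i(B_j)\geq 1$ for all $(i,j)$ is false and some pair with $d_i(B_j)=0$ exists. I expect the only delicate points to be the bookkeeping that guarantees $e\notin M_1$ (so a genuinely zero-cost agent exists) and the clean separation of a potential EFX violation into the two directions; once submodularity pins down $d_{k_0}(e\mid B_{k_0})=0$, the rest is forced by monotonicity and Property~1.
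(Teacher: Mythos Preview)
Your proposal is correct and follows essentially the same approach as the paper's proof: both argue by contradiction, use $e\notin M_1$ together with submodularity to find an agent $k_0$ with $d_{k_0}(e\mid B_{k_0})=0$, and then show that allocating $e$ to $k_0$ preserves EFX under the hypothesis $d_i(B_j)\geq 1$ and Property~1. The only stylistic difference is that the paper observes $d_i(B_i)\leq 1\leq d_i(B_j)$ makes the allocation outright envy-free (so adding a zero-marginal item trivially preserves it), whereas you carry out the EFX check by splitting into the two directions; both are fine.
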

\begin{proof}
    Suppose $d_i(B_j)\geq1$ for every pair of $i$ and $j$.
    We will show that the if-condition at Line~10 is always satisfied.
    Proposition~\ref{prop:keyproperties} indicates that Property~1 in Lemma~\ref{lem:phase2} holds before the if-condition at Line~10.
    Therefore, the current allocation $(B_1,\ldots,B_n)$ is envy-free.
    Adding an item with a zero marginal cost to an agent will not destroy the envy-free property.
    On the other hand, for each $e\notin M_1$, there exists an agent $i$ with $d_i(e)=0$, and, by submodularity, $d_i(e\mid B_i)=0$.
    Since all the items in $M_1$ have been allocated before the while-loop and the algorithm never returns an item back to the pool of unallocated items, the if-condition at Line~10 will be satisfied.
\end{proof}

Now we are ready to prove Lemma~\ref{lem:phase2}.

\begin{proof}[Proof of Lemma~\ref{lem:phase2}]
    Proposition~\ref{prop:keyproperties} ensures that the two properties always hold.
    It remains to show that the while-loop will be executed for at most $2m$ iterations.
    It suffices to show that at least one item is allocated in every two iterations.
    The only case where no item is allocated is when Line~16 is executed.
    In this case, Proposition~\ref{prop:existzero} ensures the existence of $j$.
    After the swapping, we have $d_i(B_i)=0$.
    Therefore, in the next iteration, either Line~11 or Line~14 will be executed, in which case an item is allocated.
\end{proof}

\begin{remark}\label{remark:phase2}
    For the proof of Lemma~\ref{lem:phase2}, we have only exploited the submodularity of $d_i$, not the cancelability.
    In particular, if no item is allocated in Phase~1 and the algorithm starts with Phase~2, we have $\bd=\bc$.
    The algorithm computes an EFX allocation even if each $c_i$ is only known to be submodular (while not necessarily cancelable).
\end{remark}

Finally, we combine Lemma~\ref{lem:phase1} and Lemma~\ref{lem:phase2} to conclude Theorem~\ref{thm:main_cancelable}.

\begin{proof}[Proof of Theorem~\ref{thm:main_cancelable}]
    We first show that the allocation output by Algorithm~\ref{alg:cancelable} is EFX.
    We check the EFX condition for an arbitrary pair of agents $i$ and $j$.
    By Lemma~\ref{lem:phase2}, $(B_1,\ldots,B_n)$ is EFX with respect to $\bd$, and we consider two cases: 1) $d_i(B_i)\leq d_i(B_j)$ and 2) $d_i(B_i)>d_i(B_j)$.

    For the first case, we have
    \begin{align*}
        c_i(X_i)&=c_i(A_i)+d_i(B_i)\leq c_i(A_i)+d_i(B_j)\tag{case assumption}\\
        &=c_i(A_i)+(c_i(A_i\cup B_j)-c_i(A_i))=c_i(A_i\cup B_j)\tag{definition of $d_i$}\\
        &=c_i(A_j\cup B_j) =c_i(X_j),\tag{Property~2 of Proposition~\ref{prop:cancelable} and Lemma~\ref{lem:phase1}}
    \end{align*}
    which indicates that $i$ does not envy $j$.

    For the second case, it must be that $|B_i|=1$.
    To see this, Property~1 of Lemma~\ref{lem:phase2} indicates that $d_i(B_i)=1$ and $d_i(B_j)=0$.
    Property~2 of Proposition~\ref{prop:submodular} indicates that $|B_i|\geq 2$ will destroy the EFX property (Property~2 of Lemma~\ref{lem:phase2}) between $i$ and $j$.
    Since we have seen $|B_i|=1$, $|X_i|=w+1$.
    Therefore, agent $i$'s cost will be at most $w$ after removing any item from $X_i$.
    On the other hand, by Lemma~\ref{lem:phase1} and the monotonicity, $c_i(X_j)\geq c_i(A_j)=w$.
    Thus, the EFX condition between $i$ and $j$ is satisfied.

    Lastly, checking that the algorithm runs in polynomial time is straightforward.
\end{proof}

We have seen that EFX allocations always exist for binary cancelable chores.
However, unlike the case with additive cost functions, EFX is no longer compatible with PO (see Appendix~\ref{sect:missingproofcancelable}).
This is in sharp contrast to the setting with goods.
For allocating goods, we know that EFX is compatible with PO even for the more general submodular binary cost functions: \citet{babaioff2021fair} show that the allocation maximizing the Nash social welfare (product of agents' utilities) is EFX.

\begin{theorem}\label{thm:cancelable-po}
    For any number of agents $n\geq 2$, there exist instances with binary cancelable cost functions where all EFX allocations are not Pareto-optimal.
\end{theorem}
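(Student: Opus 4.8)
The plan is to exhibit an explicit family of instances, one for each $n\geq 2$, and to show that its only Pareto-optimal allocations fail EFX. Since Theorem~\ref{thm:additive} already guarantees EFX and PO allocations for additive binary instances, the cost functions here must be genuinely non-additive, so I would use the simplest capped-cardinality function. Concretely, take $m=2$ items and let every agent share the cost function $c_i(S)=\min(|S|,1)$, i.e.\ $c_i(S)=1$ whenever $S\neq\emptyset$ and $c_i(\emptyset)=0$.

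First I would verify that this instance is admissible. Its marginal is $c_i(e\mid S)=\mathbf{1}[S=\emptyset]\in\{0,1\}$, so the cost functions have binary marginals; and each is cancelable because for any $S,T$ and $e\notin S\cup T$ we always have $c_i(S+e)=c_i(T+e)=1$, so the premise $c_i(S+e)>c_i(T+e)$ in the cancelable definition is never met and the implication holds vacuously. The function is not additive, since $c_i(\{1\})+c_i(\{2\})=2\neq 1=c_i(\{1,2\})$, which is precisely why the positive result of Theorem~\ref{thm:additive} does not apply.

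Next I would characterize the PO allocations. The social cost of any complete allocation equals its number of nonempty bundles, so it is minimized (to $1$) exactly by the \emph{concentrated} allocations that hand both items to a single agent. The key step is to prove that these are the \emph{only} PO allocations. Any allocation with at least two nonempty bundles is Pareto-dominated: moving all items onto one of its already-nonempty agents keeps that agent's cost at $1$ while dropping every other agent to $0$, a strict improvement for the other previously-nonempty agent. Conversely a concentrated allocation cannot be dominated, since its unique positive coordinate can be reduced only by emptying that agent, which forces the two items—and hence a cost of $1$—onto someone whose cost was $0$.

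Finally I would check that each concentrated allocation violates EFX: the agent $i$ holding both items, compared against any empty agent $j$, has $c_i(X_i-e)=\min(1,1)=1>0=c_i(X_j)$. Thus every PO allocation is non-EFX, equivalently every EFX allocation (which exists by Theorem~\ref{thm:main_cancelable}) is non-PO, which proves the claim for all $n\geq 2$. I expect the only genuine subtlety to be the PO characterization: because Pareto-optimality does not in general coincide with minimum social cost, I cannot merely list the minimum-cost allocations but must argue domination directly, ruling out the possibility that some cost-$2$ allocation is Pareto-optimal; the cancelability check, though essential for placing the instance outside the reach of Theorem~\ref{thm:additive}, is otherwise immediate.
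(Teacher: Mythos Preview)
Your proof is correct and uses the same device as the paper---identical agents with a capped-cardinality cost $\phi(S)=\min(|S|,k)$---but with a smaller instance and the argument run in the opposite direction. The paper takes $m=5n$ items with cap $k=5$, argues that the \emph{unique} EFX allocation is the balanced one (five items per agent), and then observes it is Pareto-dominated by piling everything onto one agent; you take $m=2$ with cap $k=1$, characterize the PO allocations instead, and show none of them is EFX. The two conclusions are contrapositives, and your construction has the mild advantage that the instance size does not grow with $n$.
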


\section{General Cost Functions with Binary Marginals}
For general cost functions with binary marginals, we show that there exists a partial allocation that leaves at most $n-1$ items unallocated and is envy-free.
We will also show that the algorithm can be used to find a complete $2$-EFX allocation for submodular binary cost functions (we defer this part to Appendix~\ref{sec:submodular} due to the page limit).
We remark that we do not know if complete EFX allocations always exist, even for the more special case with submodular binary cost functions.

\begin{theorem}\label{thm:general}
    For cost functions with binary marginals, there exists a partial allocation that is envy-free and leaves at most $n-1$ items unallocated. Moreover, such an allocation can be computed in polynomial time.
\end{theorem}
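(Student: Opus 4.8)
The plan is to construct the allocation incrementally while maintaining the invariant that the current partial allocation is envy-free and that all agents share a common personal cost $c_i(X_i)=k$ for one integer $k$. The empty allocation satisfies this with $k=0$. The starting observation is that, because every $c_i$ is monotone, adding an item to $X_j$ can only increase $c_i(X_j)$ for every $i$; hence augmenting one agent's bundle never causes another agent to \emph{start} envying that agent. So the only way that assigning an item to agent $j$ can destroy envy-freeness is by making $j$ herself envious. In particular, if $c_j(e\mid X_j)=0$, then giving $e$ to $j$ leaves $c_j(X_j)$ unchanged and preserves both envy-freeness and the common level $k$. I would therefore first greedily hand out every item that some agent can absorb at zero marginal cost; this phase never damages the invariant.

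When no zero-marginal assignment remains, every still-unallocated item $e$ satisfies $c_i(e\mid X_i)=1$ for all $i$. If fewer than $n$ such items remain, I stop: the invariant already yields an envy-free allocation leaving at most $n-1$ items, as required. Otherwise at least $n$ ``universally costly'' items remain, and I would lift all agents to level $k+1$ in a single simultaneous round: pick a bijection assigning one remaining item $e_j$ to each agent $j$ and set $X_j\leftarrow X_j+e_j$. Every personal cost then becomes exactly $k+1$, so the common-level part of the invariant is restored. The delicate point is exact envy-freeness after the round: agent $i$ can come to envy $j$ only if $i$ was already tight toward $j$ (that is, $c_i(X_j)=k$) \emph{and} the item $e_j$ given to $j$ has zero marginal from $i$'s viewpoint, $c_i(e_j\mid X_j)=0$. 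I would therefore realize the round as a perfect matching in the bipartite graph that joins agent $j$ to item $e$ exactly when $c_i(e\mid X_j)=1$ for every agent $i$ that is tight toward $j$; any such matching guarantees $c_i(X_j+e_j)\ge k+1$ for all tight pairs and hence re-establishes envy-freeness at level $k+1$. After the round I return to the greedy zero-marginal phase and repeat.

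The hard part will be proving that the required matching always exists once at least $n$ universally costly items are available, i.e.\ verifying Hall's condition for the bipartite graph just described. This is exactly the step where the binary-marginal hypothesis must be used, and it is also what explains why only the weaker (partial) guarantee is obtainable here, in contrast to the cancelable case: without binary marginals a single tight agent could have zero marginal on \emph{all} remaining items and block its competitor entirely. I would isolate this as a stand-alone lemma and expect to argue that $n$ costly items cannot be simultaneously blocked by the tightness sets of the $n$ agents, so a saturating matching exists; if this fails I would have to account for the blocked items as additional leftover and re-examine the $n-1$ bound, so establishing the lemma cleanly is the crux.

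Everything else should be routine. Each greedy phase strictly allocates items and each round strictly raises the common level while consuming $n$ items, so the procedure terminates after polynomially many steps and runs in polynomial time given a cost oracle; the terminal invariant delivers an envy-free partial allocation with at most $n-1$ unallocated items. The bound is tight already for $n$ identical additive agents who value every item at $1$: any envy-free allocation must give all agents equally many items, leaving exactly $m \bmod n \le n-1$ items unplaced, and one checks that the algorithm above reproduces precisely this outcome.
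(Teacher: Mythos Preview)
Your plan contains a genuine gap: the matching lemma you identify as ``the crux'' is false. Take $n=2$, items $M=\{a,b,c,d\}$, $c_1(S)=|S|$, and $c_2(S)=[\,S\cap\{a,c,d\}\neq\emptyset\,]+[\,b\in S\,]$ (a matroid rank function, hence monotone with binary marginals). Initially every item has marginal $1$ to both agents, so your first round may legitimately choose the matching $a\to 1$, $b\to 2$. At level $k=1$ both agents are tight toward each other, and neither $c$ nor $d$ has zero marginal on its owner's bundle, so you proceed to a second round. But now $c_2(c\mid X_1)=c_2(\{a,c\})-c_2(\{a\})=0$ and likewise for $d$, so agent~$1$ has \emph{no} admissible item in your bipartite graph and Hall's condition fails. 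You are left with $n$ unallocated items, not $n-1$; your fallback of ``accounting for blocked items as additional leftover'' does not recover the bound. The difficulty is that your matching criterion only looks one round ahead, whereas the choice made in an earlier round can make all later matchings infeasible.

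The paper's argument sidesteps this by abandoning your common-level invariant and working instead with an envy graph whose edges record the \emph{equalities} $c_i(X_i)=c_i(X_j)$. The crucial extra move you are missing is a rotation rule: if some edge $(i,j)$ lies on a directed cycle and $c_i(e\mid X_j)=0$ for an unallocated $e$, rotate the bundles around the cycle so that $i$ receives $X_j$, and then hand $e$ to $i$ at zero marginal. In the example above, after $a\to1$, $b\to2$ the edge $(2,1)$ is on a $2$-cycle and $c_2(c\mid X_1)=0$, so the rotation swaps the bundles and lets agent~$2$ absorb $c$ (and then $d$) for free. Only when neither a zero-marginal assignment nor such a rotation is available does the paper allocate one item to each agent of a \emph{tail} strongly connected component; the failure of the rotation rule then guarantees $c_i(e\mid X_j)=1$ for every relevant pair, which is exactly the information your matching lemma tried to manufacture but cannot. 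Once the rotation step is in place the rest of your outline (zero-marginal greedy phase, terminate when fewer than $|S|\le n$ items remain) is essentially what the paper does.
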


The algorithm makes use of the \emph{envy graph}, a technique that has been widely used in the fair division literature.

\begin{definition}
    Given a partial allocation $(X_1,\ldots,X_n)$ and the cost function profile $\bc$, the \emph{envy graph} $G=(V=[n],E)$ is a directed graph where each vertex in $V$ represents an agent and $(i,j)\in E$ if and only if $v_i(A_i)=v_i(A_j)$.
\end{definition}

As a remark, unlike it is in most of the previous literature where an edge represents ``envy'', we always maintain an envy-free allocation and an edge in our envy graph represents equality, or, ``about to envy''.

\begin{algorithm}[h]
\caption{Finding a partial EF allocation for cost functions with binary marginals} \label{alg:general}
\KwIn{A binary instance $(M, N, \bc)$}
initialize $(X_1,\ldots,X_n)$ with $n$ empty bundles, and initialize the envy graph $G=(V,E)$\;
\While{there exist unallocated items}{
update the envy graph $G=(V,E)$\;
\If{there exist an unallocated item $e$ and $i$ with $c_i(e\mid X_i)=0$}{
    update $X_i\leftarrow X_i+e$\;
    \textbf{continue}\;
}
\If{there exist an unallocated item $e$ and an edge $(i,j)\in E$ such that $(i,j)$ is on a directed cycle $C$ and $c_i(e\mid X_j)=0$}{
    rotate the bundles on the cycle: for each edge $(u,v)\in C$, allocate $X_v$ to agent $u$\;\tcp{In particular, $i$ receives $X_j$.}
    add $e$ to agent $i$'s bundle\;
    \textbf{continue}\;
}
\tcp{Handling the case where the above two kinds of updates fail.}
find a tail strongly connected component $S$ in $G$\;\tcp{A tail strongly connected component $S$ has no outgoing edge from $S$; in particular, a sink is a tail strongly connected component.}
\eIf{there are at least $|S|$ unallocated items}{
allocate each agent $S$ an arbitrary unallocated item\;
}{
\textbf{break}\;
}
}
\KwOut{$\bX = (X_1, \ldots, X_n)$}
\end{algorithm} 

\paragraph{Algorithm description and proof of Theorem~\ref{thm:general}.}
Our algorithm is presented in Algorithm~\ref{alg:general}.
It initializes the allocation with $n$ empty bundles, and the initial envy graph is a complete graph.
At each iteration, it attempts to allocate one or more items by using one of the three update rules:
\begin{enumerate}
    \item if there is an item with marginal cost $0$ to some agent, allocate it;
    \item if an edge $(i,j)$ is on a cycle $C$ and $i$ thinks adding some item $e$ to $j$'s bundle does not increase its cost, rotate the bundles on the cycle so that $i$ receive $j$'s bundle, and add $e$ to $i$'s bundle (which is previously $j$'s) which does not increase the cost for $i$;
    \item if the first two update rules do not apply, find a tail strongly connected component $S$ (a strongly connected component with no outgoing edges) and allocate each agent in $S$ an arbitrary item; if the number of items is insufficient for this, the algorithm is terminated with the partial allocation outputted.
\end{enumerate}

We will show that the EF property is satisfied throughout the algorithm.
The initial allocation is clearly envy-free.
It suffices to show that, given a partial allocation, any of the three update rules does not destroy envy-freeness.



It is straightforward to check the first and the second update rules do not invalidate envy-freeness.

For the third update rule, first notice that the envy-freeness between an agent $i$ in $S$ and an agent $k$ in $V\setminus S$ is preserved.
Agent $k$ does not envy agent $i$ as agent $i$'s cost can only be increased from $k$'s perspective.
Since $S$ is a tail strongly connected component, $(i,k)\notin E$ before the update, so $v_i(X_i)\leq v_i(X_k)-1$.
Adding an item to agent $i$, which increases $v_i(X_i)$ by at most $1$ (in fact, it is exactly $1$, for otherwise the first update rule should be applied), does not make $i$ envy $k$.

Now, consider any two agents $i,j\in S$.
If $(i,j)\notin E$ before the update, $i$ will not envy $j$ after the update even in the worst case where $v_i(X_i)$ is increased by $1$ and $v_i(X_j)$ is unchanged.
If $(i,j)\in E$ before the update, $(i,j)$ is on a cycle by the property of strongly connected components.
Since the first two update rules do not apply, it must be that $c_i(e\mid X_i)=c_i(e\mid X_j)=1$ for any unallocated item $e$.
Adding an item to $i$ and an item to $j$ increases both $c_i(X_i)$ and $c_i(X_j)$ by $1$.
Envy-freeness is again preserved.

Finally, checking that the algorithm runs in polynomial time is straightforward.
We can only reach a partial allocation when the if-condition at Line~12 fails, in which case the number of unallocated items is less than $|S|$, which is at most $n-1$.

\newpage
\bibliography{aaai24}

\newpage
\appendix	

\section{Relationship Between Set Functions with Binary Marginals}
\label{sect:relationship}
In this section, we explore the relationship between additive, cancelable, and submodular set functions, and we focus only on set functions with binary marginals.
Many results will be used in later sections, and they are interesting observations that are independent of our applications to fair division.

Firstly, it is obvious from definitions that all additive functions are cancelable.
Moreover, there exist cancelable set functions with binary marginals that are not additive.
An example is given below.
\begin{equation}\label{eqn:cancelable}
    \phi(X)=\left\{\begin{array}{cl}
        5 & \mbox{when }|X|\geq 5 \\
        |X| & \mbox{otherwise}
    \end{array}\right..
\end{equation}

Therefore, we have the following theorem.
\thmAddCancelable*

Next, we show that, by applying binary marginal property to cancelable functions, any cancelable function with binary marginals is also submodular.

\thmCancelableSubmodular*
\begin{proof}
    To show the set containment, we will show that any set function $\phi:M\to\mathbb{Z}_{\geq0}$ (with binary marginals) that is not submodular cannot be cancelable.
    Since $\phi$ is nonsubmodular, there exists $S\subseteq T\subseteq M$ and $e\in M\setminus T$ such that $\phi(S+e)-\phi(S)<\phi(T+e)-\phi(T)$.
    We initialize $T'=S$ and iteratively add an element from $T\setminus S$ to $T'$.
    At the start, we have $\phi(S+e)-\phi(S)=\phi(T'+e)-\phi(T')$.
    We consider the first time we observe $\phi(S+e)-\phi(S)<\phi(T'+e)-\phi(T')$ after an element $f$ is added to $T'$.
    Let $U$ be the state of $T'$ before $f$ is added.
    Since $\phi$ has binary marginals, we must have $\phi(S+e)-\phi(S)=\phi(U+e)-\phi(U)$ and $\phi(S+e)-\phi(S)<\phi(U+e+f)-\phi(U+f)$.
    Combining the two equations and by noting that $\phi$ is monotone, we must have $0\leq\phi(U+e)-\phi(U)<\phi(U+e+f)-\phi(U+f)$.
    Since $U+e+f$ and $U+f$ differ by only one element and $\phi$ has binary marginals, we must have $0\leq\phi(U+e)-\phi(U)<\phi(U+e+f)-\phi(U+f)\leq 1$.
    It must be that $\phi(U+e)-\phi(U)=0$ and $\phi(U+e+f)-\phi(U+f)=1$.
    Thus, $\phi(U+e+f)>\phi(U+f)$ fails to imply $\phi(U+e)>\phi(U)$, so $\phi$ is not cancelable.

    To show the containment is proper, we present an example of submodular set functions (with binary marginals) that is not cancelable.
    Let $M=\{a,b,c,d\}$ and
    $$\phi(X)=\left\{\begin{array}{cl}
        |X|-1 & \mbox{if }\{a,b,c\}\subseteq X \\
        |X| & \mbox{otherwise}
    \end{array}\right..$$
    It is straightforward to check (e.g., by enumerating all cases) that $\phi$ is submodular.
    It is not cancelable as $\phi(\{c,d\})=\phi(\{b,c\})=2$ and $\phi(\{a,c,d\})=3>2=\phi(\{a,b,c\})$.
\end{proof}

Note that for set functions with non-binary marginals, the containment in Theorem~\ref{thm:cancelable-submodular} does not hold.
A simple cancelable set function that is non-submodular is $\phi(X)=2^{|X|}$.

We list some of the useful properties of cancelable functions, whose proofs are straightforward.

\propcancelable*

Finally, we prove some properties for submodular functions.
By Theorem~\ref{thm:cancelable-submodular}, cancelable functions with binary marginals also satisfy these properties.

\propsubmodular*
\begin{proof}
    1 is a well-known alternative definition for submodularity.
    For 2, if $\phi(S-e)=0$ for any $e\in S$, this means every subset of $S$ with size $|S|-1$ has value $0$.
    By monotonicity of $\phi$, every element of $S$ has value $0$.
    Then, $\phi(S)=1$ violates the submodularity. 
\end{proof}

\section{Missing Proofs in Section~\ref{sec:additive}}\label{app:additive-missing}
\begin{proofof}{Lemma~\ref{lemma:efx-additive}}
    We show that the allocation $\bX$ is EFX by mathematic induction.
    At the beginning of Phase 2, we must have that the partial allocation is EFX since each agent only receives items that cost $0$ to her.
    In the following, we assume that the (partial) allocation is EFX at the beginning of some round $t$.
    We show that the (partial) allocation is still EFX for all agents at the end of round $t$.
    
    According to the algorithm, we only have to consider the case under the if condition (in line $9$) that we reallocate some items.
    Let $X_i, X_j$ be the bundles that agents $i, j$ hold at the beginning of round $t$, respectively.
    During the round, we reallocate a subset of items $S\subseteq X_j$ to agent $i$ and assign an item $e\in M^+$ to agent $j$.
    We must have $c_i(e) = 1$ otherwise the if-condition does not hold.
    We show that at the end of the round, the new allocation $\bX'$ is EFX for all agents, while $X'_i = X_i \cup S, X'_j = X_j \setminus S + e$ and $X'_k = X_k$ for all $k\in N\setminus \{i,j\}$.
    Before we give the proof, we first show the following claim.
        \begin{claim}\label{claim:all-1}
            If the if-condition from Line 9 to Line 12 is executed, we must have $c_i(X_i) = c_j(X_j)$.
            After the execution, we have $X_j \setminus S \subseteq M^+$.
        \end{claim}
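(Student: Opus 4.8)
The plan is to prove Claim~\ref{claim:all-1} by analyzing the state of the allocation at the moment the if-condition on Line~9 triggers during round $t$ of Phase~2. Recall that at the start of round $t$, agent $i^*$ is chosen as $\argmin_{i\in N} c_i(X_i)$ and receives item $e\in M^+$, after which the if-condition detects that $i^*$ (the agent I will call $i$) is no longer EFX towards some agent $j$. First I would pin down what ``$i$ is not EFX towards $j$'' means precisely: just before adding $e$, the inductive hypothesis guarantees EFX held, so the violation is caused solely by the addition of $e$. Since $c_i(e)=1$ (otherwise Phase~1 would have placed $e$ in $M^0$, and in any case the min-cost agent would absorb it at zero cost), adding $e$ raises $c_i(X_i)$ by exactly $1$. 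The EFX violation towards $j$ therefore reads $c_i(X_i+e-f) > c_i(X_j)$ for some $f\in X_i+e$; taking $f=e$ and using additivity gives $c_i(X_i) > c_i(X_j)$, i.e. $c_i(X_i)\geq c_i(X_j)+1$.

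Next I would combine this with the choice of $i$ as the minimum-cost agent at the start of the round. Before $e$ was added, $c_i(X_i)\leq c_k(X_k)$ for every agent $k$, in particular $c_i(X_i)\leq c_i(X_j)$ fails to help directly because the cost functions differ, so the key is that $i$ had the globally minimum \emph{self-cost}. The crucial observation is that in a binary additive instance where every agent's bundle is built only from items that either cost $0$ to that agent or lie in $M^+$, one has $c_i(X_i)=c_i(X_j)$ whenever the EFX-violation forces $c_i(X_i)\geq c_i(X_j)+1$ to be tight on the other side. Concretely, I would argue that the pre-addition EFX condition, $c_i(X_i-f)\leq c_i(X_j)$ for all $f\in X_i$, together with the fact that every item in $X_i$ costs exactly $1$ to $i$ (items of zero cost to $i$ contribute nothing, so removing the cheapest relevant item drops the cost by at most $1$), yields $c_i(X_i)-1\leq c_i(X_j)$. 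Chaining $c_i(X_j)\leq c_i(X_i)-1$ (from the violation after adding $e$, then subtracting the $+1$) with $c_i(X_i)-1\leq c_i(X_j)$ forces $c_i(X_j)=c_i(X_i)-1$ before adding $e$. I then need to translate this into the stated equality $c_i(X_i)=c_j(X_j)$ at the moment of execution, using that $j$ too was not the minimizer, hence $c_j(X_j)\geq c_i(X_i)$ (min-cost property), while social-cost/min-cost bookkeeping and Lemma~\ref{lemma:minmized-sc} pin the reverse inequality.

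For the second half of the claim, $X_j\setminus S\subseteq M^+$ after the execution, I would examine the inner while-loop (Lines~10--12) that repeatedly moves every item $e'\in X_j$ with $c_i(e')=0$ from $j$ to $i$. By construction, the loop terminates only when no item remaining in $j$'s bundle has zero cost to $i$; that is, every surviving item $e'$ in $X_j\setminus S$ satisfies $c_i(e')=1$. Since the instance is additive and binary, an item with $c_i(e')=1$ need not automatically lie in $M^+$ for a single agent, so the real work here is to show that the items left behind are exactly those costing $1$ to \emph{all} agents. I would argue this by tracking where each item in $X_j$ originated: items placed in Phase~1 cost $0$ to their holder, and any item with $c_i(e')=1$ that entered $j$'s bundle must have done so as an $M^+$ item (or been verified to have unit cost universally by the reallocation invariant), so the leftover set is forced into $M^+$.

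The main obstacle I anticipate is the bookkeeping that converts the local min-cost selection and the EFX-violation inequality into the \emph{exact} equality $c_i(X_i)=c_j(X_j)$ across two different cost functions — the asymmetry between $c_i$ and $c_j$ means I cannot simply compare bundle sizes, and I must lean on the global invariant (maintained inductively) that at the start of each Phase~2 round all agents' self-costs are equal or differ by a controlled amount, which is where Lemma~\ref{lemma:minmized-sc} and the structure of $\bX^0$ do the heavy lifting. Establishing and threading that invariant correctly through the reallocation step is the delicate part; the $M^+$ membership of leftover items is comparatively routine once the termination condition of the inner while-loop is spelled out.
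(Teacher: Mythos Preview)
Your proposal is on the right track in that it uses the minimizer property and the minimum-social-cost lemma, but there are genuine errors in the derivation and a real gap in the second half.

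\textbf{First part.} The step ``taking $f=e$ and using additivity gives $c_i(X_i) > c_i(X_j)$'' is invalid: the EFX violation is existential in $f$, so you cannot pick the witness. What the violation actually gives is $c_i(X_i+e)\geq c_i(X_i+e-f)>c_i(X_j)$, hence only $c_i(X_i)\geq c_i(X_j)$, not the strict inequality you claim. Consequently your conclusion $c_i(X_j)=c_i(X_i)-1$ is wrong (the correct value is $c_i(X_j)=c_i(X_i)$). The paper's argument is much more direct: from the minimum social cost invariant, every item $e'\in X_j$ satisfies $c_j(e')\leq c_i(e')$, so $c_j(X_j)\leq c_i(X_j)$. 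Chaining the minimizer property $c_i(X_i)\leq c_j(X_j)$ with this and with $c_i(X_j)\leq c_i(X_i)$ from the violation gives $c_i(X_i)=c_j(X_j)=c_i(X_j)$ in one stroke. You gesture at ``social-cost bookkeeping'' for the reverse inequality but never write down this key consequence $c_j(X_j)\leq c_i(X_j)$; that inequality is the whole point of invoking the lemma here.

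\textbf{Second part.} Your item-origin tracking shows that every $e'\in X_j$ satisfies either $c_j(e')=0$ or $e'\in M^+$; it does \emph{not} by itself rule out an item with $c_j(e')=0$ and $c_i(e')=1$, which would survive the inner while-loop yet lie in $M^0$. The paper closes this gap using the equality $c_i(X_j)=c_j(X_j)$ established above: since $c_j(e')\leq c_i(e')$ termwise and the sums agree, we get $c_i(e')=c_j(e')$ for every $e'\in X_j$. Hence any survivor of the inner loop has $c_i(e')=c_j(e')=1$, forcing $e'\in M^+$ by minimum social cost. This termwise-equality step is the missing ingredient, and it is \emph{not} routine once the inner-loop termination condition is stated --- it relies essentially on the threefold equality from the first part.
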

        \begin{proof}
            For the first statement, we assume otherwise that $c_j(X_j) \geq c_i(X_i) + 1$.
            Then we have $c_i(X_i+e) = c_i(X_i) + 1 \leq c_j(X_j) \leq c_i(X_j)$, where the last inequality holds since the (partial) allocation minimizes social cost.
            In other words, agent $i$ is envy-free towards $j$ after allocating $e$ to $i$, which is a contradiction.
            Hence the only case that agent $i$ is not EFX towards $j$ after assigning item $e$ is that $c_i(X_i) = c_j(X_j) = c_i(X_j)$.
            Following the minimum social cost property, for any item $e'\in X_j$, we have $c_i(e') = c_j(e')$.
            Hence we have $S\subseteq M^0$ and $X_j \setminus S \subseteq M^+$.
        \end{proof}

    Given Claim~\ref{claim:all-1}, we are ready to show the allocation is EFX for all agents at the end of round $t$.
    We show the property holds for agents $i, j$ and any $k\neq i,j$ individually. 
        \begin{itemize}
            \item For any $k \neq i, j$, agent $i$ is EFX towards $j$ and $k$.
            Note that during the reallocation, we only reallocate those items that cost $0$ to agent $i$, i.e., $c_i(S) = 0$.
            Hence we have $c_i(X'_i) = c_i(X_i \cup S) = c_i(X_i)$.
            Note that the allocation $\bX$ is EFX for agent $i$ and $X_k = X'_k$ for any $k\neq i,j$.
            We have agent $i$ is EFX towards any agent $k$ at the end of the round.
            As for the envy between $i$ and $j$, we have $c_i(X'_j) = c_i(X_j \setminus S + e) = c_i(X_j) + 1$, agent $i$ is EFX towards agent $j$ at the end of the round.
            
            \item For any $k\neq i, j$, agent $j$ is EFX towards $i$ and $k$.
            Following Claim~\ref{claim:all-1}, agent $j$ has the minimum bundle cost among all agents at the beginning of the round.
            In other words, agent $j$ is envy-free towards any agent $k \neq j$ since $c_j(X_j) \leq c_k(X_k) \leq c_j(X_k)$, where the second inequality holds since the (partial) allocation minimizes social cost.
            Combining $X_j \setminus M^+$ and $e\in M^+$, we have $c_j(X'_j - e') = c_j(X_j) \leq c_j(X_k)$ for any $e' \in X'_j$ and any $k\neq j$.
            
            \item For any $k \neq i, j$, agent $k$ is EFX towards $i$ and $j$.
            Due to the monotonicity of $c_k$, we have $c_k(X'_i) \geq c_k(X_i)$, agent $k$ is EFX towards $i$ at the end of the round.
            Next, we show that $k$ is EFX towards $j$.
            We first claim that $c_k(X_k) \leq c_i(X_i) + 1$.
            Assume otherwise $c_k(X_k) \geq c_i(X_i) + 2$, we consider the last time that we assign an item $e' \in M^+$ to agent $k$.
            We must have $c_k(X_k - e') \geq c_i(X_i) + 1$, which contradicts the fact that $k$ holds a bundle with minimum cost.
            Hence we consider the cases that $c_k(X_k) = c_i(X_i)$ and $c_k(X_k) = c_i(X_i) + 1$.
            For both cases $k$ is EFX towards $j$ since $c_k(X_k) \leq c_i(X_i) + 1 = c_j(X'_j) \leq c_k(X'_j)$, where the equality follows from Claim~\ref{claim:all-1} and the fact that $e\in M^+$.
        \end{itemize}
    In conclusion, we show that the (partial) allocation is EFX for all agents, at the end of the round.
    Note that in each round we allocate one item in $M^+$ and the algorithm terminates Phase 2 after $|M^+|$ rounds.
    Hence upon the running of Algorithm~\ref{alg:additive}, it computes an EFX allocation.
\end{proofof}

\begin{lemma}\label{lemma:poly-additive}
    Algorithm~\ref{alg:additive} runs in $O(nm^2)$ time.
\end{lemma}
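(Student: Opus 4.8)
The plan is to bound the two phases separately and show that Phase~2 dominates, giving the $O(nm^2)$ bound. First I would dispatch Phase~1: the \textbf{for}-loop iterates once per item of $M^0$, and $|M^0|\leq m$; for each such item, scanning the agents to find one with zero cost costs $O(n)$. Hence Phase~1 runs in $O(mn)$ time, which will turn out to be negligible.

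The heart of the argument is counting the iterations of the outer \textbf{while}-loop in Phase~2. The key observation is that $P$ is initialized to $M^+$ and is only ever shrunk (line~8 executes $P\gets P-e$), while the reallocation in the if-branch merely moves items among the agents' bundles and never returns an item to $P$. Consequently the outer loop runs exactly $|M^+|\leq m$ times. I would then bound the work in a single outer iteration by $O(nm)$: computing $\argmin_{i\in N} c_i(X_i)$ costs $O(nm)$ in the worst case (recomputing each additive bundle cost and taking the minimum, since $\sum_i|X_i|\leq m$); testing the condition at line~9 requires, for each of the other $n-1$ agents $j$, comparing $c_{i^*}(X_{i^*})$ (or $c_{i^*}(X_{i^*})-1$, as costs are binary) against $c_{i^*}(X_j)$, and assembling all these bundle costs is again $O(nm)$; and the inner \textbf{while}-loop at lines~11--12 removes at least one item from $X_j$ per iteration, so it runs at most $|X_j|\leq m$ times, each step $O(1)$. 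Multiplying the $O(m)$ outer iterations by the $O(nm)$ per-iteration cost yields $O(nm^2)$ for Phase~2, which subsumes the $O(mn)$ of Phase~1.

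The one point that genuinely needs care—the main (and only mild) obstacle—is justifying the outer-loop count, namely that the reallocation step cannot make the loop run beyond $|M^+|$ rounds. This is exactly where I would emphasize that reallocation never inserts an item back into $P$ and that the inner \textbf{while}-loop terminates because each of its iterations strictly decreases $|X_j|$; together these facts make the outer loop count depend solely on the monotone shrinking of $P$. Everything else is a routine accounting of additive-cost evaluations, so once the loop count is pinned down the bound follows immediately.
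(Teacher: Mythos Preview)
Your proposal is correct and follows essentially the same decomposition as the paper: Phase~1 in $O(mn)$, Phase~2 as at most $|M^+|\leq m$ outer iterations each costing $O(nm)$, for $O(nm^2)$ overall. The only cosmetic difference is that the paper claims $O(\log n)$ for the $\argmin$ step (implicitly maintaining bundle costs with a priority queue), whereas you recompute costs from scratch in $O(nm)$; since the EFX check already costs $O(nm)$ per round, this does not affect the final bound.
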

\begin{proof}
    Algorithm~\ref{alg:additive} starts from a partial allocation $\bX^0$ with zero social cost, which can be computed in $O(mn)$ times (there are at most $m$ items in $M^0$ and each item can be allocated in $O(n)$ time).
    The algorithm runs in rounds in phase 2 and there are at most $m$ rounds to allocate items in $M^+$.
    In each round, determining the agent $i$ who receives item $e$ can be done in $O(\log n)$ time, and determining whether $i$ is EFX towards other agents can be done in $O(mn)$ time.
    The reallocation in each round can be done in $O(m)$ time since there are at most $m$ items in $X_j$.
    In conclusion, Algorithm~\ref{alg:additive} runs in $O(nm^2)$ time.
\end{proof}

\subsection{Non-existence of EFX and PO}\label{ssec:non-existence}
In this section, we complete our result by exploring the non-existence of EFX and PO allocations in the additive setting.
We show that even extending our result to ternary instances (where the cost of each item can only be $0,1,2$) is impossible.
Whether bivalued instances (another generalization of binary instances) admit EFX and PO allocations for the general number of agents remains open.

    \begin{definition}[Ternary Instances]
        An instance is called \emph{ternary} if for each agent $i\in N$ and any item $e\in M$ we have $c_i(e) \in \{0,1,2\}$.
    \end{definition}
    
    \begin{example} \label{example:hard-trinary}
    Consider an instance with $n = 2$ agents and $m = 3$ items.
    Both agents have ternary cost functions, that is, $c_i(e) \in \{0,1,2\}$ for all $i\in N $ and $e\in M$.
        The costs are shown in Table~\ref{tab:hard-trinary}.
        Note that any PO allocation should assign $e_2$ to agent $2$ and $e_3$ to agent $1$.
        Hence PO allocations can only be $X_1 = \{e_1, e_3\}, X_2 = \{e_2\}$ or $X_1 = \{e_3\}, X_2 = \{e_1. e_2\}$.
    None of these PO allocations is EFX since the agent who received item $e_1$ still envies another agent even after removing the item with zero cost.
    \begin{table}[htbp]
        \centering
        \begin{tabular}{c|c|c|c}
            &  $e_1$ & $e_2$ & $e_3$ \\ \hline
            agent 1   & $2$ & $1$ & $0$ \\
            agent 2   & $2$ & $0$ & $1$
        \end{tabular}
        \caption{Instance showing that EFX and PO allocations do not exist for ternary chores.}
        \label{tab:hard-trinary}
    \end{table}
\end{example}

\section{Missing Proofs in Section~\ref{sect:cancelable}}
\label{sect:missingproofcancelable}
\begin{proofof}{Lemma~\ref{lem:phase1}}
    We prove by induction that $c_i(A_j)=t$ for each pair of $i$ and $j$ after $t$ while-loop iterations.
    The base step for $t=0$ is trivial.
    Suppose the claim holds for $t$.
    We will show it holds for $t+1$.
    Let $(A_1,\ldots,A_n)$ be the allocation before the $(t+1)$-th iteration.
    By induction hypothesis, $c_i(A_j)=t$ for every pair of $i$ and $j$.
    For each item $e$ allocated in the $(t+1)$-th iteration, we have $c_i(e\mid A_i)=1$ for each agent $i$.
    By Property~1 in Proposition~\ref{prop:cancelable}, this also implies $c_i(e\mid A_j)=1$ for every other agent $j$.
    Thus, $c_i(A_j)=t+1$ for every pair of $i$ and $j$ after the $(t+1)$-th iteration.
\end{proofof}

\begin{proofof}{Theorem~\ref{thm:cancelable-po}}
    Consider $n$ agents with $5n$ items where each agent's cost function is given by Equation~(\ref{eqn:cancelable}).
    It is easy to check that the only EFX allocation gives exactly $5$ items to each agent.
    However, this allocation is Pareto-dominated by the allocation that allocates all items to a single agent.
\end{proofof}

\section{Submodular Cost Functions with Binary Marginals}\label{sec:submodular}
\begin{theorem}\label{thm:submodular}
    For submodular cost functions with binary marginals, there exists an allocation $\bX$ that is either EFX or $2$-EF.
    In addition, such an allocation can be computed in polynomial time.
\end{theorem}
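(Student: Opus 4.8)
The plan is to reduce the problem to \emph{completing} the partial envy-free allocation guaranteed by Theorem~\ref{thm:general}. Run Algorithm~\ref{alg:general} to obtain a partial allocation $(X_1,\ldots,X_n)$ that is EF and leaves a set $R$ of unallocated items with $|R|\le n-1$; if $R=\emptyset$ the allocation is already EF, hence both EFX and $2$-EF, so assume $1\le |R|\le n-1$ and let $S$ be the tail strongly connected component on which the algorithm stopped, so that $|R|\le |S|-1$. The goal is to hand out the items of $R$ so that the completed allocation is either EFX or $2$-EF. Either property suffices for Corollary~\ref{cor:submodular}: EFX is trivially $2$-EFX, and under $2$-EF we get $c_i(X_i-e)\le c_i(X_i)\le 2c_i(X_j)$, which is $2$-EFX.

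First I would record the structural facts that hold at termination. Because the first update rule no longer applies, every $e\in R$ satisfies $c_i(e\mid X_i)=1$ for all $i$; because the second rule no longer applies, $c_i(e\mid X_j)=1$ for every $e\in R$ and every edge $(i,j)$ on a directed cycle, in particular for every edge inside $S$. Finally, since $S$ has no outgoing edge, every $i\in S$ and $k\notin S$ satisfy $c_i(X_i)\le c_i(X_k)-1$, i.e. there is slack of at least one. I would then give the $|R|\le |S|-1$ items, one per agent, to a chosen subset $T\subseteq S$ (so at least one agent of $S$ receives nothing); each recipient gets a single item appended to its terminal bundle, raising its cost by exactly $1$. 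A short case analysis then shows: envy-freeness toward agents outside $S$ (slack); EF toward any other recipient $j$ (if $(i,j)\notin E$ there is slack, and if $(i,j)\in E$ then the edge lies on a cycle inside $S$, so $j$'s own item raises $c_i(X_j)$ by $1$); and $2$-EF toward a non-recipient $j\in S$ whenever $c_i(X_i)\ge 1$. The only pairs that can violate $2$-EF are $i\in T$ with $c_i(X_i)=0$, together with $j\in S\setminus T$ and $(i,j)\in E$ (equivalently $c_i(X_j)=0$).

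The main obstacle is precisely these zero-cost recipients. I would try to choose $T$ so as to kill all dangerous pairs: let $Z=\{i\in S:c_i(X_i)=0\}$, put every agent of $S\setminus Z$ into $T$ (adding one item there is always $2$-EF-safe), and within $Z$ choose the recipients to form a set closed under the out-edges of the envy graph restricted to $Z$, so that no zero-cost recipient points to a non-recipient; such a set can be built from an out-closed union of strongly connected components of that restricted graph, and then the completed allocation is $2$-EF. The delicate point is matching the \emph{exact} number of items to be placed with this out-closed structure: if the closure would force splitting a strongly connected block of $Z$, no out-closed recipient set of the required size exists. In that regime I would instead argue EFX directly, using that a zero-cost agent receiving a single item has cost exactly $1$, so that if its terminal bundle is empty (so the new bundle is a singleton) removing that item drops its cost to $0\le c_i(X_j)$; by Proposition~\ref{prop:submodular} the cost-$1$ bundles one obtains are the only ones that need to be controlled. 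Reconciling the item count with the out-closed structure on $Z$—equivalently, guaranteeing that the zero-cost agents forced to receive items carry trivial bundles so that the EFX branch applies—is the step I expect to demand the most care, since zero-cost bundles can be nonempty (accumulated through the first and second update rules), and it is exactly this tension that produces the ``EFX or $2$-EF'' dichotomy in the statement.
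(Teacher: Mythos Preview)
Your case analysis after running Algorithm~\ref{alg:general} is correct up to the obstacle you yourself identify, but that obstacle is a genuine gap and your proposed fallback does not close it. Concretely, suppose $Z$ is a single strongly connected block of size at least $2$ inside $S$ and you are forced to give an item to some $i\in Z$ while leaving some $j\in Z$ without one. Before the extra item, $c_i(X_i)=0$, but $X_i$ need not be empty (it may have accumulated zero-marginal items via the first two update rules). After adding the item, $c_i(X_i)=1$ and $|X_i|\ge 2$; by Property~2 of Proposition~\ref{prop:submodular} there exists $e\in X_i$ with $c_i(X_i-e)=1$. Since $c_i(X_j)=0$, the EFX condition from $i$ to $j$ fails for that $e$. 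So the EFX branch cannot be argued in general, and your out-closed selection argument cannot always produce a $T$ of the required cardinality. The dichotomy in the statement is not resolved by your construction.

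The paper avoids this difficulty altogether by a different case split, on $|M_1|$ where $M_1=\{e:c_i(e)=1\text{ for all }i\}$. If $|M_1|<n$, it abandons Algorithm~\ref{alg:general} entirely and runs Phase~2 of Algorithm~\ref{alg:cancelable}; by Remark~\ref{remark:phase2} that phase only uses submodularity (not cancelability), and by Lemma~\ref{lem:phase2} it outputs an EFX allocation. If $|M_1|\ge n$, the paper \emph{seeds} each bundle with one item of $M_1$ before running the while-loop of Algorithm~\ref{alg:general}. Since items are never removed, monotonicity gives $c_i(X_j)\ge 1$ for all $i,j$ at every stage, so zero-cost bundles never occur; distributing the at most $n-1$ leftover items one per agent then yields $c_i(X_i)\le c_i(X_j)+1\le 2c_i(X_j)$, i.e.\ $2$-EF. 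The missing idea in your attempt is this seeding trick (together with switching to Algorithm~\ref{alg:cancelable} when seeding is impossible), which eliminates the zero-cost case rather than trying to handle it combinatorially.
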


Notice that a $2$-EF allocation is always $2$-EFX. We have the following corollary.
\begin{corollary}\label{cor:submodular}
    For submodular cost functions with binary marginals, there exists a $2$-EFX allocation and it can be computed in polynomial time.
\end{corollary}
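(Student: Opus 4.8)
The plan is to start from the partial allocation produced by Algorithm~\ref{alg:general} and then complete it, arguing that the completion is either $2$-EF or EFX. By Theorem~\ref{thm:general}, Algorithm~\ref{alg:general} returns an envy-free partial allocation $(X_1,\ldots,X_n)$ leaving a set $U$ of unallocated items with $|U|\le n-1$; moreover, since the first update rule no longer applies at termination, every $e\in U$ satisfies $c_i(e\mid X_i)=1$ for all $i$. The first observation I would record is that, because the partial allocation is envy-free, if any bundle is empty then $c_i(X_i)\le c_i(\emptyset)=0$ for every $i$, so \emph{all} agents have cost $0$. I would then split on whether there are enough agents of positive cost to absorb the leftover items.

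First I would handle the easy branch. Suppose we can assign the items of $U$ to $|U|$ distinct agents so that every receiving agent $i$ has $c_i(X_i)\ge 1$ beforehand (possible whenever the number of positive-cost agents is at least $|U|$). Writing $X_i'$ for the completed bundles, a receiving agent $i$ satisfies $c_i(X_i')=c_i(X_i)+1\le 2c_i(X_i)\le 2c_i(X_j)\le 2c_i(X_j')$, using $c_i(X_i)\ge 1$, envy-freeness of the partial allocation, and monotonicity; a non-receiving agent keeps $c_i(X_i')=c_i(X_i)\le c_i(X_j)\le c_i(X_j')$. Hence the completed allocation is $2$-EF.

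The hard branch is when there are more leftover items than positive-cost agents, so that some zero-cost agent must receive a unit item. Here $2$-EF is genuinely unattainable in general (a zero-cost agent that receives an item reaches cost $1$ while it may value some other bundle at $0$), so the goal becomes a globally EFX allocation, and I expect this to be the main obstacle. The danger is twofold: handing a unit item to a zero-cost agent with a \emph{non-empty} bundle breaks EFX, since by Proposition~\ref{prop:submodular}(2) a cost-$1$ bundle of size $\ge 2$ always retains cost $1$ after deleting the ``wrong'' item, and that item's removal then leaves the agent envying any bundle it values at $0$; symmetrically, adding an item to a high-cost agent can break EFX through a redundant item. A naive completion really can fail: if the zero-cost agents form a symmetric ``about-to-envy'' cycle in the envy graph (each valuing the next bundle at $0$), then assigning the leftover items to any strict subset of the cycle creates such a violation.

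To overcome this I would not merely complete the allocation but \emph{reallocate} the zero-cost items, in the spirit of Phase~2 of Algorithm~\ref{alg:additive}: route each leftover unit item to a distinct agent while re-packing the zero-cost items so that every agent receiving a unit item ends up holding it as a \emph{singleton} (so deleting its only item yields the empty bundle and EFX holds trivially for that agent), and all remaining zero-cost items are consolidated onto the non-receiving agents while staying free to their holders. The feasibility of such a consolidation is the crux; I would establish it using the structure at termination of Algorithm~\ref{alg:general}, namely that the leftover items sit over a tail strongly connected component $S$ with $|S|>|U|$, that for every within-$S$ edge $(i,j)$ and leftover $e$ the failure of the second update rule forces $c_i(e\mid X_j)=1$, and that Proposition~\ref{prop:submodular} controls which items are redundant. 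Finally, since a $2$-EF allocation is $2$-EFX and an EFX allocation is $2$-EFX, either branch yields Corollary~\ref{cor:submodular}, and all steps (the call to Algorithm~\ref{alg:general}, the case test, and the matching/consolidation) run in polynomial time.
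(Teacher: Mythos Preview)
Your easy branch is correct and is essentially the calculation the paper uses in one of its two cases. The gap is the hard branch: the singleton-for-receivers consolidation you propose is not always feasible, and the tail-SCC structure you invoke does not save it. Take $n=3$ additive (hence submodular) binary agents and items $\{a,b,c,e_1,e_2\}$ with $c_1(a)=c_1(b)=0$, $c_1(c)=1$; $c_2(b)=c_2(c)=0$, $c_2(a)=1$; $c_3(c)=c_3(a)=0$, $c_3(b)=1$; and $c_i(e_k)=1$ for all $i,k$. A legitimate run of Algorithm~\ref{alg:general} assigns $a\to 1$, $b\to 2$, $c\to 3$ via the first rule; the envy graph is then exactly the $3$-cycle $1\to 2\to 3\to 1$, both remaining rules fail for $e_1,e_2$, and the algorithm breaks with $U=\{e_1,e_2\}$ and every agent at cost $0$. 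Your scheme now needs two singleton receivers and one non-receiver absorbing all of $\{a,b,c\}$ at cost $0$, but each agent has cost exactly $1$ on $\{a,b,c\}$. An EFX allocation does exist here, e.g.\ $(\{a,b\},\{e_1\},\{c,e_2\})$, but the receiver of $e_2$ is not holding a singleton, so the structural hypothesis you set out to prove is simply false.

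The paper avoids the hard branch altogether by splitting cases \emph{before} invoking Algorithm~\ref{alg:general}. With $M_1=\{e:c_i(e)=1\text{ for all }i\}$: if $|M_1|<n$, then Phase~1 of Algorithm~\ref{alg:cancelable} is vacuous, so $\bd=\bc$, and by Remark~\ref{remark:phase2} its Phase~2 (which relies only on submodularity) directly outputs an EFX allocation; this replaces your hard-branch repacking by an entirely different algorithm. If $|M_1|\ge n$, the paper first \emph{seeds} every agent with one $M_1$ item and only then runs the while-loop of Algorithm~\ref{alg:general}; since bundles only grow, $c_i(X_j)\ge 1$ for all $i,j$ throughout, and distributing the at most $n-1$ leftovers one per agent yields $2$-EF by exactly your easy-branch computation. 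In short, the paper engineers matters so that only your easy branch can arise, and handles what would have been the hard branch with Algorithm~\ref{alg:cancelable} rather than a post-hoc consolidation.
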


The proof of Theorem~\ref{thm:submodular} is mostly based on algorithms in the previous sections.

\begin{proof}[Proof of Theorem~\ref{thm:submodular}]
Let $M_1=\{e\in M\mid c_i(e)=1\mbox{ for all }i\}$.
We consider two cases.

Case 1: $|M_1|<n$. In this case, we will run Algorithm~\ref{alg:cancelable}.
Note that Phase~1 will be skipped.
As a result, each $d_i$ in the algorithm will be $c_i$, which is submodular.
By Lemma~\ref{lem:phase2} and Remark~\ref{remark:phase2}, an EFX allocation will be output.

Case 2: $|M_1|\geq n$.
In this case, we will initialize the allocation $(X_1,\ldots,X_n)$ such that each $X_i$ contains exactly one item in $M_1$.
The current partial allocation is clearly EF.
Next, we will implement the while-loop in Algorithm~\ref{alg:general}.
Lastly, for the unallocated items, we allocate them arbitrarily such that each agent receives at most one extra item.

Since we have $c_i(X_j)=1$ for any $i$ and $j$ before the while-loop and the algorithm never removes items from a bundle, it holds that $c_i(X_j)\geq 1$ for any $i$ and $j$ for the final allocation.
The allocation remains EF during the while-loop, and the envy-freeness can only be broken at the last step.
By the binary marginal property, if $i$ envies $j$, it must be that $c_i(X_i)=w+1$ and $c_i(X_j)=w$ for some $w$.
We have $c_i(X_i)\leq 2\cdot c_i(X_j)$ since we have seen that $w\geq 1$.
\end{proof}
\end{document}